\tikzstyle{blank}=[fill=white, draw=white, shape=circle, inner sep=0, minimum size=.3cm]
\tikzstyle{black_circle}=[fill=black, draw=black, shape=circle]
\tikzstyle{white_circle}=[fill=white, draw=black, shape=circle]
\tikzstyle{blue_circle}=[fill=blue, draw=black, shape=circle]
\tikzstyle{black_square}=[fill=black, draw=black, shape=rectangle]
\tikzstyle{blue_square}=[fill=blue, draw=black, shape=rectangle]
\tikzstyle{grey square}=[fill=gray, draw=gray, shape=rectangle]
\tikzstyle{ctrl}=[fill=black, draw=black, shape=circle, inner sep=0, minimum size=1mm]
\tikzstyle{target inner}=[draw=black, shape=circle, cross out, rotate=45, inner sep=0, minimum size=1mm, tikzit fill={rgb,255: red,255; green,64; blue,47}]
\tikzstyle{target outer}=[draw=black, shape=circle, inner sep=0, minimum size=2mm, tikzit fill={rgb,255: red,255; green,51; blue,24}]
\tikzstyle{gate}=[fill=white, draw=black, shape=rectangle]
\tikzstyle{white_borderless}=[fill=white, draw=white, shape=circle, scale=0.8]
\tikzstyle{blue_dashed_arrow_right}=[->, draw=blue, dashed, thick]
\tikzstyle{black_arrow_left}=[<-]
\tikzstyle{blue_dashed_arrow_left}=[<-, draw=blue, dashed, thick]
\tikzstyle{red_dashed_arrow_left}=[draw={rgb,255: red,191; green,0; blue,64}, <-, dashed, thick]
\tikzstyle{red_dashed_arrow_right}=[draw={rgb,255: red,191; green,0; blue,64}, ->, dashed, thick]
\tikzstyle{new edge style 0}=[-, draw={rgb,255: red,144; green,144; blue,144}]
\tikzstyle{green_edge}=[->, draw={rgb,255: red,26; green,212; blue,79}, dashed, thick]
\tikzstyle{orange_edge}=[draw={rgb,255: red,255; green,176; blue,66}, ->, dashed, thick]
\tikzstyle{black_edge}=[->, thick]
\tikzstyle{new edge style 1}=[-, draw={rgb,255: red,150; green,150; blue,150}, dashed]
\tikzstyle{black_thick_l}=[thick, <-]
\tikzstyle{dotted blue}=[-, draw={rgb,255: red,93; green,195; blue,239}, dashed, thick]
\tikzstyle{new edge style 2}=[draw=red, ->]
\tikzstyle{new edge style 3}=[-, draw={rgb,255: red,230; green,54; blue,19}]
\tikzstyle{new edge style 4}=[->, dashed]
\tikzstyle{thick}=[-, thick]
\newcommand{\N}{\mathbb{N}}
\newcommand{\Z}{\mathbb{Z}}
\newcommand{\F}{\mathbb{F}}
\newcommand{\R}{\mathbb{R}}
\newcommand{\C}{\mathbb{C}}
\newcommand{\Fflow}{\(\mathbb{Z}_d\)-flow\xspace}
\newcommand{\Fflows}{\(\mathbb{Z}_d\)-flows\xspace}
\newcommand{\measp}[1]{\mathcal{M}(#1)}
\newcommand{\trans}[1]{{#1}^{\top}}
\newtheorem{theorem}{Theorem}
\newtheorem{definition}[theorem]{Definition}
\newtheorem{proposition}[theorem]{Proposition}
\newtheorem{lemma}[theorem]{Lemma}
\newtheorem{corollary}[theorem]{Corollary}
\title{Outcome determinism in measurement-based quantum computation with qudits}
\author{Robert I. Booth}
\affiliation{
  Sorbonne Universit\'e, CNRS, LIP6,
  4 place Jussieu, \mbox{F-75005} Paris, France
}
\affiliation{
  LORIA CNRS, Inria Mocqua, Universit\'e de Lorraine, \mbox{F-54000} Nancy,
  France
}
\author{Aleks Kissinger}
\affiliation{
  University of Oxford, United Kingdom
}
\author{Damian Markham}
\affiliation{
  Sorbonne Universit\'e, CNRS, LIP6,
  4 place Jussieu, \mbox{F-75005} Paris, France
}
\affiliation{
 JFLI, CNRS / National Institute of Informatics, University of Tokyo, Tokyo, Japan
}
\author{Cl\'ement Meignant}
\affiliation{
  Sorbonne Universit\'e, CNRS, LIP6,
  4 place Jussieu, \mbox{F-75005} Paris, France
}
\author{Simon Perdrix}
\affiliation{
  LORIA CNRS, Inria Mocqua, Universit\'e de Lorraine, \mbox{F-54000} Nancy,
  France
}
\begin{document}

\maketitle

\begin{abstract}
  In measurement-based quantum computing (MBQC), computation is carried out by a
  sequence of measurements and corrections on an entangled state.
  Flow, and related concepts, are powerful techniques for characterising the
  dependence of the corrections on previous measurement outcomes.
  We introduce flow-based methods for MBQC with qudit graph states, which we
  call \Fflow, when the local dimension is an odd prime.
  Our main results are a proof that \Fflow is a necessary and sufficient condition
  for a strong form of outcome determinism.
  Along the way, we find a suitable generalisation of the concept of
  measurement planes to this setting and characterise the allowed measurements
  in a qudit MBQC.
  We also provide a polynomial-time algorithm for finding an optimal \Fflow
  whenever one exists.
\end{abstract}

In measurement based quantum computation (MBQC), one starts with an entangled 
resource state (usually graph states \cite{briegel_persistent_2001}), and
computation is carried out by sequential measurements where at each stage the
measurement choice depends on previous results \cite{raussendorf_one-way_2001,
  raussendorf_computational_2002, danos_determinism_2006,
  danos_measurement_2007}. This adaptivity is necessary to combat the randomness
induced by measurements, and plays an important role both foundationally
\cite{da_silva_closed_2011, raussendorf_measurement-based_2011} and in terms of
trade-offs in optimizing computations \cite{broadbent_parallelizing_2009,
  miyazaki_analysis_2015}.

Causal flow \cite{danos_determinism_2006}, and its generalisation gflow
\cite{browne_generalized_2007}, are graph-theoretical tools for characterising
and analysing adaptivity for graph states in MBQC.
They have proven a powerful tool including optimizing adaptive measurement 
patterns \cite{eslamy_geometry-based_2018}, translating between MBQC and the
circuit picture \cite{miyazaki_analysis_2015, backens_there_2021} , with
application for parallelising quantum circuits \cite{broadbent_parallelizing_2009},
to construct schemes for the verification of blind quantum computation
\cite{fitzsimons_unconditionally_2017, mantri_flow_2017}, to extract bounds on
the classical simulatability of MBQC \cite{markham_entanglement_2014}, to prove
depth complexity separations between the circuit and measurement-based models of
computation \cite{broadbent_parallelizing_2009, miyazaki_analysis_2015}, to
study trade-offs in adiabatic quantum computation \cite{antonio_adiabatic_2014}
and recently for applying ZX-calculus techniques
\cite{duncan_graph-theoretic_2020, de_beaudrap_pauli_2020}, including to circuit
compilation \cite{duncan_graph-theoretic_2020}.

In recent years, we have seen increased interest in computing, and quantum
information in general, over higher dimensional qudit systems, as opposed to
qubits \cite{wang_qudits_2020}. The added flexibility of increased dimension
allows, for example, for shorter circuits in computation
\cite{kiktenko_scalable_2020}, asymptotic improvement in circuit depths
\cite{gokhale_asymptotic_2019}, optimal error correcting codes
\cite{cleve_how_1999} and noise tolerance in quantum key distribution
\cite{bechmann-pasquinucci_quantum_2000, cerf_security_2002}. Furthermore many
physical systems exist which naturally encode qudits \cite{blatt_entangled_2008,
  erhard_twisted_2018, gao_arbitrary_2019}. This has motivated the translation
of MBQC into qudits \cite{zhou_quantum_2003}, which naturally leads to the
question: can the flow techniques above be extended to the qudit setting?

In this work we generalise gflow to the qudit setting, when the local dimension 
is an odd prime. In section \ref{sec:preliminaries} we review quantum
computation with qudits and introduce our computational model, a qudit version
of the measurement calculus \cite{danos_measurement_2007}. This requires careful
consideration of the measurements which are allowed as part of the MBQC. We also
discuss the various determinism conditions present in the literature and in
particular define robust determinism. In section \ref{sec:flow} we introduce
\Fflow, and show that it is sufficient to obtain a robustly deterministic MBQC.
We also prove a converse: any robustly deterministic MBQC is shown to have a
\Fflow. Finally, in section \ref{sec:algorithm} we present a polynomial-time
algorithm for determining if a given graph has a \Fflow, and further prove that
it always produces \Fflows of minimal depth (if it succeeds). These results are
the first step in a characterisation of outcome determinism in MBQC for a large
class of finite-dimensional quantum systems.

\section{Preliminaries}
\label{sec:preliminaries}
Throughout this paper, \(d\) denotes an arbitrary prime, and \(\Z_d = \Z/d\Z\)
the ring of integers with arithmetic modulo \(d\).
We also put \(\omega \coloneqq e^{i\frac{2\pi}{d}}\), and let \(\Z_d^*\) be the
group of units of \(\Z_d\).
Since \(d\) is prime, \(\Z_d\) is a field and \(\Z_d^* = \Z_d \setminus \{0\}\)
as a set.

\subsection{Computational model}

The Hilbert space of a qudit
\cite{gottesman_fault-tolerant_1999,wang_qudits_2020} is \(\mathcal{H} =
\operatorname{span}\{\ket{m} \mid m \in \Z_d\} \cong \C^d\), and we write
\(U(\mathcal{H})\) the group of unitary operators acting on \(\mathcal{H}\). We
have the following standard operators on \(\mathcal{H}\), also known as the
clock and shift operators:
\begin{equation}
  Z \ket{m} \coloneqq \omega^{m} \ket{m} \qand
  X \ket{m} \coloneqq \ket{m+1}
  \qq{for any} m \in \Z_d.
  \label{eq:pauli} 
\end{equation}
In particular, note that \(ZX = \omega XZ\).
We call any operator of the form \(\omega^k X^a Z^b\) for \(k,a,b \in \Z_d\) a
\emph{Pauli operator}, although we will often drop the phase \(\omega^k\) as it
is of little importance in most cases. We say a Pauli operator is \emph{trivial}
if it is proportional to the identity.  The Paulis are further related by the
Hadamard gate:
\begin{equation} 
  H\ket{m} = \frac{1}{\sqrt{d}} \sum_{n \in \Z_d} \omega^{mn} \ket{n}
  \qq{s.t.} HXH^\dagger = Z \qand HZH^\dagger = X^{-1}.
  \label{eq:hadamard}
\end{equation}
Equations \eqref{eq:pauli} and \eqref{eq:hadamard} imply that both \(X\) and
\(Z\), and in fact every Pauli has spectrum \(\{\omega^k \mid k \in \Z_d\}\).

We also use the controlled-Z gate, which acts on \(\mathcal{H} \otimes
\mathcal{H}\),
\begin{equation}
  E \ket{m} \ket{n} \coloneqq \omega^{mn} \ket{m}\ket{n}.
\end{equation}

It is important to emphasise a key difference between the qudit and the qubit
case: when \(d \neq 2\), none of these operators are self-inverse.
In fact, if \(Q\) is a Pauli and \(I\) the identity operator on \(\mathcal{H}\),
we have:
\begin{equation}
  Q^d = I, \quad E^d = I \otimes I \qand H^4 = I.
\end{equation}
As a result, they are not self-adjoint either, something which needs to be taken
into account when describing measurements.

\subsubsection{Measurement spaces}
\label{ssec:measurement_spaces}

For qubit MBQC, it is well established that by using a Pauli \(X\), \(Y\) or
\(Z\) as an acausal correction operator, it is possible to perform MBQC on graph
states where the measurements are taken from the plane on the Bloch sphere
orthogonal to the correction \cite{browne_generalized_2007}. Since there are
three Pauli operators for qubits, this yields three allowable measurement planes
for MBQC.

This interpretation is not as clear in the qudit case, partly because Pauli
operators are self-adjoint only in the case \(d=2\), but mostly because the
geometry of the Bloch ``space'' is not as intuitive in the general case.
A qudit-measurement will be described by a unitary matrix $M$: given its
spectral decomposition $M=\sum_i \lambda_i P_i$, an $M$-measurement is the
projective measurement $\{P_i\}_{i \in \Z_d}$. In the context of MBQC, we would
like to have a distinguished measurement outcome, the one that does not need
corrections, so we assume that all measurements have a fixpoint.

\begin{definition}
  $M$ is a \emph{fixpoint unitary} if $M^\dagger M = M M^\dagger = I$ and
  $\exists \ket \phi \neq 0$ s.t. $M\ket \phi =\ket \phi$. Given $(a,b) \in
  \Z_d^2 \setminus \{(0,0)\}$, the \emph{measurement space} $\mathcal{M}(a,b)$
  is defined as
   \(\mathcal{M}(a,b):=\{\text{fixpoint unitaries }M\text{~s.t.~}X^aZ^bM=\omega
    MX^aZ^b\}\). 
\end{definition}

It should be pointed out that the commutation relation used to define the
measurement space \(\mathcal{M}(a,b)\) is somewhat arbitrary.
We could have chosen instead to use the relation
\begin{equation}
  X^a Z^b M = \omega^p M X^a Z^b \qq{for some} p \in \Z_d^*.
  \label{eq:measurement_commutation}
\end{equation}
However, nothing is lost by considering only \(p=1\), since if \(M\) verifies
equation \eqref{eq:measurement_commutation}, then
\begin{equation}
  X^{p^{-1}a} Z^{p^{-1}b} M = \omega^{p^{-1}p} M X^{p^{-1}a} Z^{p^{-1}b}
  = \omega M X^{p^{-1}a} Z^{p^{-1}b},
\end{equation}
(where this calculation is formally carried out using \(p^{-1} =
\frac{d+1}{p}\)) which implies that \(M \in \mathcal{M}(p^{-1}a,p^{-1}b)\).

In fact, this construction is very analogous to one used in qudit quantum error
correction where the \(M\) are called detectable errors
\cite{gottesman_fault-tolerant_1999}.
The main point of this definition is that the Pauli \(X^a Z^b\) can be used to
translate the eigenvectors of any measurement in the corresponding measurement
space:
\begin{proposition}
   If \(M \in \mathcal{M}(a,b)\) for some non-trivial Pauli \(Q=X^a Z^b\), then
  the spectrum of \(M\) is \(\{\omega^m \mid m \in \Z_d\}\), each eigenvalue
  has multiplicity \(1\), and \(M\) is special unitary.
  Denoting \(\ket{0:M}\) the fixpoint of \(M\), then \(\ket{m:M} = Q^{-m}
  \ket{0:M}\) is an eigenvector of \(M\) associated with eigenvalue
  \(\omega^m\).
  \label{prop:measurement_PVM}
\end{proposition}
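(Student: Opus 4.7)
The plan is to manufacture a full eigenbasis of $M$ from the single fixpoint $\ket{0:M}$, using $Q^{-1}$ as a ``raising operator'' suggested by the commutation relation.

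First, I would rewrite the commutation relation in a more convenient form: $QM = \omega MQ$ is equivalent, after conjugation by $Q^{-1}$, to $MQ^{-1} = \omega Q^{-1} M$. Applied iteratively to the fixpoint $\ket{0:M}$, this yields $M Q^{-m} \ket{0:M} = \omega^m Q^{-m} \ket{0:M}$ for every $m \in \Z_d$, so that setting $\ket{m:M} := Q^{-m}\ket{0:M}$ produces an eigenvector of $M$ with eigenvalue $\omega^m$. These vectors are all non-zero since $Q$ is a Pauli, hence invertible.

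Second, because $d$ is prime the $d$ values $\omega^0, \omega^1, \ldots, \omega^{d-1}$ are pairwise distinct, so the $\ket{m:M}$ lie in pairwise distinct eigenspaces of $M$ and are automatically linearly independent. As $\dim \mathcal{H} = d$, this forces the spectrum of $M$ to be exactly $\{\omega^m \mid m \in \Z_d\}$, each eigenvalue of multiplicity~$1$, and the family $\{\ket{m:M}\}_{m \in \Z_d}$ to form a basis of $\mathcal{H}$.

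Finally, special unitarity is a direct determinant computation: $\det M = \prod_{m \in \Z_d} \omega^m = \omega^{d(d-1)/2}$, which equals $1$ when $d$ is odd (the regime to which the paper's main results apply; for $d=2$ the Paulis themselves already fail to be special unitary). The only delicate point of the whole argument is keeping the direction of the commutation straight so that $Q^{-1}$, rather than $Q$, is the operator that increments the eigenvalue label; once that is settled the rest is routine linear algebra, and crucially uses nothing about $Q$ beyond its unitarity and its explicit commutation with~$M$.
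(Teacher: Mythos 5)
Your proof is correct and follows essentially the same route as the paper's: generate the eigenbasis by repeatedly applying \(Q^{-1}\) to the fixpoint, conclude the spectrum and multiplicities by a counting/linear-independence argument, and obtain special unitarity from \(\det M = \prod_{m\in\Z_d}\omega^m\). The only substantive difference is that you correctly note this product equals \(1\) only for odd \(d\) (the paper asserts it unconditionally, which fails at \(d=2\), e.g.\ \(M=Z\in\mathcal{M}(1,0)\) with \(\det Z=-1\)), a point worth retaining.
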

\begin{proof}
  By assumption, if \(M \in \mathcal{M}(Q)\) then \(M\) has a fixpoint \(M
  \ket{0:M} = \ket{0:M}\).
  Then, it follows from the commutation relation that
  \begin{equation}
    MQ\ket{0:M} = \omega^{-1} QM \ket{0:M} = \omega^{-1} Q \ket{0:M},
  \end{equation}
  so \(Q\ket{0:M}\) is an
  eigenvector of \(M\) associated with eigenvalue \(\omega^{-1}\).
  Repeating this procedure, we find that \(\ket{k:M} = Q^{-k}\ket{0:M}\)
  is an eigenvector of \(M\) associated with eigenvalue \(\omega^k\), and a
  counting argument shows that each of these eigenvalues must have multiplicity
  \(1\).
  Now, we have that \(\det(M) = \prod_{k \in \Z_d} \omega^k = 1\).
\end{proof}

This means that the Pauli \(Q\) can be used as correction for any measurement in
the corresponding measurement space, as is described in section \ref{sec:flow}.
As in the qubit case, pairs of measurements within the same measurement space
\(\mathcal{M}(a,b)\) are still related to each other by rotations around the
``correction'' axis \(X^a Z^b\):
\begin{proposition}
  Let \(Q=X^a Z^b\) be a non-trivial Pauli operator and \(N \in
  \mathcal{M}(a,b)\).
  Then \(M \in \mathcal{M}(a,b)\) if and only if there is a special unitary \(U
  \in SU(d)\) such that \(M = UNU^\dagger\) and \([U,Q] = 0\).
  \label{prop:measurement_space_unitary}
\end{proposition}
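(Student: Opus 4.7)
The plan is to handle the two directions separately, with the easy ``if'' direction first serving as a sanity check for the definition of $\mathcal{M}(a,b)$, and then constructing an explicit $U$ for the harder ``only if'' direction from the eigenbasis structure supplied by Proposition~\ref{prop:measurement_PVM}.

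\textbf{Reverse direction.} Assume $M = UNU^\dagger$ with $U \in SU(d)$ and $[U,Q]=0$. I would check the three defining properties of $\mathcal{M}(a,b)$ by direct computation: unitarity of $M$ is immediate from unitarity of $U$ and $N$; the fixpoint condition follows because $M(U\ket{0:N}) = UNU^\dagger U \ket{0:N} = UN\ket{0:N} = U\ket{0:N}$, and $U\ket{0:N}\neq 0$; and the commutation $QM = \omega M Q$ follows from pushing $Q$ past $U$ twice using $[U,Q]=0$ and applying $QN = \omega NQ$ in between.

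\textbf{Forward direction.} Given $M,N \in \mathcal{M}(a,b)$, Proposition~\ref{prop:measurement_PVM} supplies orthonormal eigenbases $\{\ket{k:N}\}_{k \in \Z_d}$ and $\{\ket{k:M}\}_{k \in \Z_d}$, both of the form $\ket{k:X} = Q^{-k}\ket{0:X}$, associated with the same spectrum $\{\omega^k\}$. I would define the unitary $V$ by $V\ket{k:N} \coloneqq \ket{k:M}$ for all $k \in \Z_d$ and extending linearly. Since both bases are orthonormal and of size $d$, $V$ is well-defined and unitary. Then $VNV^\dagger = M$ because they act identically on the basis $\{\ket{k:M}\}$, and $V$ commutes with $Q$ because
\begin{equation}
  VQ\ket{k:N} = V\ket{k-1:N} = \ket{k-1:M} = Q\ket{k:M} = QV\ket{k:N},
\end{equation}
using that $Q\ket{k:X} = \ket{k-1:X}$ which is an immediate consequence of $\ket{k:X}=Q^{-k}\ket{0:X}$. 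To upgrade $V$ to a special unitary, pick any $d$-th root $\alpha$ of $\det(V)^{-1}$; then $U \coloneqq \alpha V$ satisfies $\det(U) = 1$, and the substitution leaves both $UNU^\dagger = M$ and $[U,Q]=0$ intact since multiplication by a scalar does not affect conjugation or commutators.

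\textbf{Main obstacle.} The only subtlety is making sure that the prescription $V\ket{k:N} \mapsto \ket{k:M}$ is globally consistent, which rests on the fact that $\{\ket{k:N}\}$ is indeed an orthonormal basis of $\mathcal{H}$. This is guaranteed by Proposition~\ref{prop:measurement_PVM}, which ensures that $N$ has $d$ non-degenerate eigenvalues. Beyond that, everything is bookkeeping; the phase adjustment to land in $SU(d)$ is harmless because the statement only requires $M = UNU^\dagger$ and $[U,Q]=0$, both of which are invariant under multiplication of $U$ by a global phase.
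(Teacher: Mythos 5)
Your proposal is correct and follows essentially the same route as the paper: both directions rest on the non-degenerate common spectrum from Proposition~\ref{prop:measurement_PVM}, the intertwiner is (implicitly in the paper, explicitly in your construction of $V$) the unitary sending $\ket{k:N}$ to $\ket{k:M}$, commutation with $Q$ is checked via the ladder relation on eigenvectors, and the determinant is fixed by a harmless global phase. Your version is in fact slightly cleaner on two points the paper glosses over: you define the intertwiner directly rather than invoking similarity and then tacitly assuming $U\ket{k:N}=\ket{k:M}$ without phase ambiguity, and you use the correct ladder direction $Q\ket{k:M}=\ket{k-1:M}$ where the paper writes $\ket{k+1:M}=Q\ket{k:M}$ (an inconsequential sign slip).
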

\begin{proof}
  \noindent\((\implies)\) If \(M \in \mathcal{M}(Q)\) then \(\mathrm{sp}(M) =
  \mathrm{sp}(N) = \{\omega^k \mid k \in \mathbb{Z}_d\}\) and each eigenvalue
  has multiplicity one. It follows that \(M\) and \(N\) are similar so that
  there is a unitary \(U\) such that \(M = UNU^\dagger\).

  Furthermore, by proposition \ref{prop:measurement_PVM}, the eigenvector
  \(\ket{k:M}\) of \(M\) can be obtained as \(Q^{-k}\ket{0:M}\), from which it
  also follows that \(\ket{k+1:M} = Q\ket{k:M}\).
  But, we also have \(MU\ket{k+1:N} = UNU^\dagger U \ket{k+1:N} = \omega^{k+1}
  U\ket{k+1:N}\) from which it follows that
  \begin{equation}
    Q U \ket{k:N} = Q \ket{k:M} = \ket{k+1:M} = U \ket{k+1:N} = U Q \ket{k:N}.
  \end{equation}
  This is true for any \(k\in\mathbb{Z}_d\), and since \(N\) is unitary its
  eigenvectors form a basis for \(\mathcal{H}\). We deduce that \(QU = UQ\).

  Finally, it is clear we can choose \(U\) to be special unitary, since for any
  unit norm \(\lambda \in \mathbb{C}\), \((\lambda U)N(\lambda U)^\dagger =
  \abs{\lambda}^2 UNU^\dagger = UNU^\dagger\).

  \noindent\((\impliedby)\) Let \(M = UNU^\dagger\) such that \([U,Q] = 0\),
  then we have
  \begin{equation}
    MQ = UNU^\dagger Q = UNQU^\dagger = \omega UQNU^\dagger
    = \omega QUNU^\dagger = \omega QM.
  \end{equation}
  Furthermore, \(M\) and \(N\) have the same spectrum, and in particular \(M\)
  has a fixpoint since \(N\) does. Then, \(M \in \mathcal{M}(Q)\).
\end{proof}

In turn, this allows us to recover a parametrisation of measurement spaces much
closer to the qubit case, where a measurement is given by angles relative to a
reference Pauli axis of the Bloch sphere.
\begin{corollary}[Measurement angles]
  For any non-zero \((a,b) \in \Z_d^2\), a measurement \(M \in \mathcal{M}(a,
  b)\) is characterised by $d-1$ angles $\vec \theta = (\theta_1, \ldots,
  \theta_{d-1})\in [0,2\pi)^{d-1}$, up to a choice of reference axis \(P \in
  \mathcal{M}(a, b)\).
  \label{cor:measurement_angles}
\end{corollary}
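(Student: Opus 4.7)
The plan is to parametrise any $M \in \mathcal{M}(a,b)$ by its fixpoint $\ket{0:M}$. By Proposition~\ref{prop:measurement_PVM}, this is sufficient: the eigenvalues are fixed as $\{\omega^k\}_{k \in \Z_d}$ and the remaining eigenvectors are $\ket{k:M} = Q^{-k}\ket{0:M}$, where $Q = X^a Z^b$. Since $Q$ is a non-trivial Pauli, it has a non-degenerate eigenbasis $\{\ket{\phi_k}\}_{k \in \Z_d}$ with $Q\ket{\phi_k} = \omega^k \ket{\phi_k}$, so I would expand $\ket{0:M} = \sum_k c_k \ket{\phi_k}$ and determine which coefficient vectors $(c_k)$ yield a valid element of $\mathcal{M}(a,b)$.

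The first step is to impose orthonormality of the translates $\{\ket{k:M}\}$. A direct computation gives $\bra{k:M}\!\ket{l:M} = \sum_j |c_j|^2 \omega^{(l-k)j}$; requiring this to equal $\delta_{k,l}$ yields a Vandermonde-type linear system (in the real variables $|c_j|^2$) over the $d$-th roots of unity, whose unique solution is $|c_j|^2 = 1/d$ for all $j$. Hence $c_j = e^{i\alpha_j}/\sqrt{d}$ with $\alpha_j \in [0, 2\pi)$, giving $d$ angular parameters for $\ket{0:M}$.

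Next, I would quotient by the overall phase of $\ket{0:M}$, which leaves $M$ unchanged and removes one parameter. To phrase the result relative to a reference $P \in \mathcal{M}(a,b)$ with $\ket{0:P} = \frac{1}{\sqrt{d}}\sum_k e^{i\beta_k}\ket{\phi_k}$, I would set
\[
  \theta_k \;\coloneqq\; (\alpha_k - \alpha_0) - (\beta_k - \beta_0) \pmod{2\pi}, \qquad k = 1, \ldots, d-1,
\]
so that $\vec{\theta} = \vec{0}$ corresponds to $M = P$ and every $\vec{\theta} \in [0, 2\pi)^{d-1}$ yields a distinct element of $\mathcal{M}(a,b)$.

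The main subtlety I expect is the consistent treatment of the global-phase quotient when passing between $\ket{0:M}$, $\ket{0:P}$, and their tuple of relative angles: one must verify that the assignment $M \mapsto \vec{\theta}$ is well-defined on equivalence classes of fixpoints and bijective onto $[0, 2\pi)^{d-1}$. An alternative, more conceptual route would go via Proposition~\ref{prop:measurement_space_unitary}: writing $M = UPU^\dagger$ with $[U, Q] = 0$, the operator $U$ is diagonal in the $Q$-eigenbasis and therefore lives in a $d$-dimensional torus of phases; the stabiliser of $P$ within this torus is one-dimensional by a Schur-lemma argument applied to the irreducible Heisenberg representation generated by $P$ and $Q$, cutting the dimension down to $d-1$.
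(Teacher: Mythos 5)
Your proof is correct, but it takes a genuinely different route from the paper's. The paper goes through Proposition~\ref{prop:measurement_space_unitary}: every \(M \in \mathcal{M}(a,b)\) is \(UPU^\dagger\) for a special unitary \(U\) commuting with \(Q\), hence diagonal in the (non-degenerate) eigenbasis of \(Q\), and the constraint \(\det U = 1\) is used to discard one redundant phase, leaving \(d-1\) --- this is essentially the ``alternative, more conceptual route'' you sketch in your final paragraph, except that you account for the redundancy via the stabiliser of \(P\) in the commutant torus rather than via the determinant. Your primary argument instead parametrises \(M\) by its fixpoint, using Proposition~\ref{prop:measurement_PVM} to reconstruct \(M\) from \(\ket{0:M}\), and shows via the Vandermonde/DFT computation that orthonormality of the translates \(Q^{-k}\ket{0:M}\) forces \(|c_j|^2 = 1/d\), leaving \(d\) phases modulo one global phase. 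Both arguments are sound and give the same count; yours is more explicit at the level of states and yields as a by-product that the eigenbasis of any \(M \in \mathcal{M}(a,b)\) is mutually unbiased with that of \(Q\), while the paper's is shorter because Proposition~\ref{prop:measurement_space_unitary} has already reduced the problem to the commutant of \(Q\). The well-definedness issue you flag (invariance of \(\vec\theta\) under the phase conventions for \(\ket{0:M}\), \(\ket{0:P}\) and the \(Q\)-eigenvectors) does require the one-line check that only phase differences enter your formula, but it is a verification, not a gap; you should also state explicitly that the inverse construction \(\ket{v} \mapsto \sum_k \omega^k Q^{-k}\ket{v}\!\bra{v}Q^k\) lands in \(\mathcal{M}(a,b)\), so that every admissible coefficient vector actually arises from a measurement.
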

\begin{proof}
  Fix some \(P \in \mathcal{M}(a,b)\), then by the proposition, every \(M \in
  Q^\dagger\) is such that \(M = UPU^\dagger\), and in particular \([U,Q] = 0\).
  This implies that in the eigenbasis of \(Q\), \(U\) takes the form of a
  diagonal matrix \(\mathrm{diag}(e^{i \theta_k} \mid k \in \mathbb{Z}_d)\) with
  \(\theta_k \in [0,2\pi)\).
  Since \(\det(U) = 1\), we have that \(\sum_{k=0}^{d-1} \theta_k = 0\) and one
  of these phases is redundant.
  Then, \(U\) and by extension, \(M\), is uniquely determined by the \(d-1\)
  phases \(\{\theta_k\}_{k=1}^{d-1}\) (and the arbitrary choice of \(P\)).
\end{proof}

As is the case for qubits, the choice of reference axes (one per measurement
space) is entirely arbitrary, so we assume for the rest of the article that
some fixed choice has been made for each measurement space.

\subsubsection{Measurement patterns}

Given that we are interested in procedures with an emphasis on measurements and
corrections conditioned on the outcomes of measurements, the quantum circuit
description of computations is not very practical for our needs.
Instead, we describe an MBQC by a sequence of commands, called a measurement
pattern. The description of measurements hinges on the characterisation of
measurement spaces in corollary \ref{cor:measurement_angles}. We suppose some
arbitrary choice of reference is made for each measurement space, then:
\begin{definition}[\cite{danos_measurement_2007}]
  A \emph{measurement pattern} on a register \(V\) of qudits consists in a
  finite sequence of \(V\)-indexed commands chosen from:
  \begin{itemize}
  \item $N_u$ : initialisation of a qudit $u$ in the state $\ket{0:X} =
    H\ket{0}$;
  \item $E_{u,v}^\lambda$ : application of $E^\lambda$ on qudits $u$ and
    $v$ for some \(\lambda \in \Z_d\), with \(u \neq v\);
  \item $M_u^{a,b}(\vec{\theta})$ : measurement of qudit $u$ in the
    measurement space \(\mathcal{M}(a,b)\) with angles \(\vec{\theta}\);
  \item $X_u^{m_v}$ and $Z_u^{m_v}$ : Pauli corrections depending on the outcome
    \(m_v\) of the measurement of vertex \(v\).
  \end{itemize}

  A measurement pattern is \emph{runnable} if no commands act on non-inputs
  before they are initialised (except initialisations) or after they are
  measured, and no commands depend on the outcome of a measurement before it is
  made.
\end{definition}

\subsection{A graph-theoretical representation}

Following \textcite{klasing_determinism_2017}, we show that measurement patterns
can be equivalently represented as labelled graphs.
Then, following \cite{zhou_quantum_2003}, these measurement patterns are
universal for all qudit quantum circuits.

\paragraph{Notation.}
A $\Z_d$-graph $G$ is a loop-free undirected $\Z_d$-edge-weighted
graph on a set $V$ of vertices.
We will identify the graph $G$ with its symmetric adjacency matrix $G \in
\Z_d^{V\times V}$ (for some arbitrary ordering of the rows and columns).
If \(A,B \subseteq V\), we will also denote \(G[A,B]\) the submatrix of \(G\)
obtained by keeping only the rows corresponding to elements of \(A\) and the
columns corresponding to elements of \(B\).
If \(A \subset V\), then we denote \(1_A \in \Z_d^V\) the column vector whose
\(u\)-th element is \(1\) if \(u \in A\), \(0\) otherwise.
Similarly we consider $\Z_d$-multisets of vertices where each vertex occurs with
a multiplicity in $\Z_d$ and we will identify the $\Z_d$-multiset with column
vectors in $\Z_d^{V}$.
The size of a multiset is defined by $|A| = \sum_{u\in V} A(u) \in Z_d$. $\trans
x$ is the transpose of $x$.
Given a Pauli operator $P$ and a multiset $A$, let $P_A := \bigotimes_{u\in
  V}P_u^{A(u)}$.
\vspace{1mm}

The commands of a measurement pattern verify the following identities, for every
\(u,v \in V\) such that \(u \neq v\):
\begin{align}
  X_u Z_v &= Z_v X_u, &
  X_u Z_u &\simeq Z_u X_u, \label{eq:commutation_rels_1} \\
  X_u M_v &= M_v X_u, &
  Z_u M_v &= M_v Z_u, \\
  E_{u,v} X_u &= X_u Z_v E_{u,v}, &
  E_{u,v} Z_u &= Z_u E_{u,v}. \label{eq:commutation_rels_3}
\end{align}
where we use the notation \(A \simeq B\) to mean that there is a phase
\(e^{i\alpha}\) such that \(A = e^{i\alpha} B\).
It was shown by \textcite{danos_measurement_2007} that any runnable measurement
pattern can be rewritten using these commutation relations to the standard
form\footnote{They worked in the qubit setting but their proof is purely
  symbolic. Rewriting \(U^\lambda = \prod_{k=0}^\lambda U\) where \(U\)
  is any unitary from equations
  \eqref{eq:commutation_rels_1}-\eqref{eq:commutation_rels_3}, and applying
  their standardisation procedure results in a pattern of the form
  \eqref{eq:standard_form}.}:
\begin{equation}
  \mathfrak{P} \simeq \left( \prod_{v \in O^\mathsf{c}}^\prec X_{\vb{x}(v)}^{m_v} Z_{\vb{z}(v)}^{m_v} M_v^{a_v,b_v}(\vec{\theta}_v) \right)
  \left( \prod\limits_{(u,v) \in G} E_{u,v}^{G_{uv}} \right)
  \left( \prod_{v \in I^\mathsf{c}} N_v \right),
  \label{eq:standard_form}
\end{equation}
where \(\vb{x},\vb{z}\) are functions \(O^\mathsf{c} \to \Z_d^V\),
\(m_v\) is the outcome of the measurement \(M_v\), \(G\) is the adjacency matrix
of a \(\Z_d\)-graph, and \((u,v) \in G\) identifies an edge in the graph \(G\).
The functions \(\vb{x},\vb{z}\) implicitly describe a measurement
order: the transitive closure of the relation \(\{(u,v) \mid \vb{x}(v)_u \neq 0
\text{ or } \vb{z}(v)_u \neq 0\}\) gives a strict partial order \(\prec\) on
\(O^\mathsf{c}\). The measurement order must agree with \(\prec\) if the pattern
is runnable.

This motivates the following definition \cite{danos_determinism_2006,
  browne_generalized_2007, backens_there_2021}:
\begin{definition}
  An \emph{open \(\Z_d\)-graph} is a triple \((G,I,O)\) where $G$ is a
  $\Z_d$-graph over $V$, and $I,O\subseteq V$ are distinguished sets of vertices
  which identify inputs and outputs in an MBQC.

  A \emph{labelled open \(\Z_d\)-graph} is a tuple \((G,I,O,\lambda)\) where
  \((G,I,O)\) is an open \(\Z_d\)-graph and $\lambda : O^c \to \Z_d^2 \setminus
  \{(0,0)\}$ assigns a measurement space to each measured vertex.
\end{definition}

Given the form \eqref{eq:standard_form}, it is clear that we can describe a
runnable measurement pattern by a tuple
\((G,I,O,\lambda,\vb{x},\vb{z},\vb{M})\), where \((G,I,O,\lambda)\) is a
labelled open \(\Z_d\)-graph, \(\vb{x},\vb{z}\) are functions \(O^\mathsf{c} \to
\Z_d^V\), and \(\vb{M}\) is a function \(O^\mathsf{c} \to U(\mathcal{H})\) such
that \(\vb{M}(u) \in \mathcal{M}(\lambda(u))\) for all \(u \in O^\mathsf{c}\).
\(\vb{M}\) gives the measurement to be made at each non-output vertex, and
\(\vb{x},\vb{z}\) describe corresponding outcome-dependant corrections. The
labelling \(\lambda\) is technically required since the syntax of measurements
in equation \eqref{eq:standard_form} depend on the labelling, but as we shall
see in the next section, once the choice of \(\vb{M}\) is made, \(\lambda\) has
no effect on the actual computation carried out by the MBQC (the semantics of
the measurement pattern).

\subsection{Determinism}

An MBQC \((G,I,O,\lambda,\vb{x,z,M})\) describes an inherently probabilistic
computation with \(d \times \abs{O^\mathsf{c}}\) possible branches (one for each
set of measurement outcomes). Given an input state \(\ket{\phi} \in
\mathcal{H}^{\otimes I}\) and set of outcomes \(\vec{m} \in
\Z_d^{O^\mathsf{c}}\), the corresponding branch is given by:
\begin{equation}
  A_{\vec{m}}(\ket{\phi}) \coloneqq
  \left( \prod_{v \in O^\mathsf{c}}^\prec X_{\vb{x}(v)}^{m_v} Z_{\vb{z}(v)}^{m_v} \bra{m_v:\vb{M}(v)}_v \right)
  \left( \prod\limits_{(u,v) \in G} E_{u,v}^{G_{uv}} \right)
  \left(\ket{\phi} \bigotimes_{u\in I^\mathsf{c}} \ket{0:X} \right)
  \label{eq:branch_map}
\end{equation}
The branch maps give a Kraus decomposition for the CPTP map
\(\mathcal{H}^{\otimes I} \to \mathcal{H}^{\otimes O}\) implemented by the MBQC:
\begin{equation}
  \rho \longmapsto \sum_{\vec{m} \in \Z_d^{O^\mathsf{c}}} A_{\vec{m}} \rho A_{\vec{m}}^\dagger.
\end{equation}

A measurement pattern is said to be \emph{deterministic} if the output does not
depend on the outcomes of the measurements. This is equivalent to saying that
all branches \eqref{eq:branch_map} are proportional, in which case the pattern
is described by the single Kraus operator \(K_{\vec{0}}\), corresponding to
obtaining outcome \(0\) for all measurements. This is by construction a
correction-less branch since we have then obtained the ``preferred'' outcome of
each measurement.  However, a problem comes up if \(K_{\vec{0}} = 0\), in which
case two deterministic MBQCs can have the same open graph but implement
different maps. See \cite{danos_determinism_2006, browne_generalized_2007,
  klasing_determinism_2017} for examples.

To exclude these pathological cases, a stronger determinism condition was
introduced by \textcite{danos_determinism_2006}: a measurement pattern is
\emph{strongly  deterministic} if all branch maps are equal up to a global
phase. In particular, strongly deterministic measurement patterns implement
isometries. 

Now, the original purpose of flow was to obtain sufficient and necessary
conditions for deciding when such an MBQC is deterministic. However, a
characterisation of strong determinism is still an open question, even in
the case of qubits. Instead, we restrict our attention to a yet stronger form of
determinism, which is both more tractable and arguably more practical
\cite{klasing_determinism_2017}:
\begin{definition}[Robust determinism]
   \((G,I,O,\lambda,\vb{x},\vb{z})\) is \emph{robustly deterministic} if
   for any \mbox{\(\prec\)-lowerset}\footnote{If \(\prec\) is a partial order on
   \(V\), then a \(\prec\)-lowerset is a subset \(S \subseteq V\) such that if
   \(u \prec v\) for some \(v \in S\), then \(u \in S\).} \(S \subseteq
   O^\mathsf{c}\) and any \(\vb{M} : S \to U(\mathcal{H})\) such that
   \(\vb{M}(u) \in \mathcal{M}(\lambda(u))\), the MBQC \((G,I,O \cup
   S^\mathsf{c},\lambda|_S,\vb{x}|_S,\vb{z}|_S,\vb{M})\) is strongly
   deterministic.
\end{definition}

Robust determinism is equivalent to the uniformly and stepwise strong
determinism of \textcite{browne_generalized_2007} in the qubit case.

\subsection{Graph states}

For an open graph \((G,I,O)\) and an arbitrary input state \(\ket{\phi}
\in \mathcal{H}^{\otimes I}\), we write
\begin{equation}
  \ket{G(\phi)} = \left(\prod\limits_{\substack{u,v\in V \\ u<v}} E_{u,v}^{G_{u,v}}
  \right)\left(\ket{\phi} \bigotimes_{u\in I^\mathsf{c}} \ket{0:X}_u\right),
\end{equation}
which we call an open graph state. Open graph states are resource states for the
MBQCs which we describe in this paper. In the case \(I=\varnothing\), we recover
the well-known qudit graph states \cite{zhou_quantum_2003, marin_access_2013}.
The stabilisers of an open graph state are given by:
\begin{proposition}[Open graph stabilisers]
  Let \((G,I,O)\) be an open graph, and \(Q\) a product of Paulis.
  Then, \(Q \ket{G(\phi)} = \ket{G(\phi)}\) for all \(\ket{\phi} \in
  \mathcal{H}^{\otimes I}\) if and only if there is a multiset \(A \in \Z_d^V\)
  such that \(A_v = 0\) for all \(v \in I\) and \(Q = \omega^{\frac{\trans
      AGA}{2}} X_A Z_{GA}\).
  \label{prop:open_graph_stabilisers}
\end{proposition}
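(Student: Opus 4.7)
The plan is to reduce the stabiliser problem from $\ket{G(\phi)}$ to the product state $\ket\phi_I \otimes \bigotimes_{u \in \nin} \ket{0:X}_u$ via conjugation by $E_G := \prod_{(u,v)\in G} E_{u,v}^{G_{u,v}}$. The commutation relations \eqref{eq:commutation_rels_3} extend to $E_{u,v}^\lambda X_u E_{u,v}^{-\lambda} = X_u Z_v^\lambda$, so conjugation by $E_G$ sends Pauli products to Pauli products. Hence $Q \ket{G(\phi)} = \ket{G(\phi)}$ for every $\ket\phi$ if and only if the Pauli $Q' := E_G^\dagger Q E_G$ fixes $\ket\phi_I \otimes \bigotimes_{u \in \nin}\ket{0:X}_u$ for every $\ket\phi$, and both directions of the proposition will follow by explicitly computing $E_G X_A E_G^\dagger$ for the allowable $X_A$.

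To identify the possible $Q'$, I would write $Q' = Q'_I \otimes Q'_{\nin}$; the arbitrariness of $\ket\phi$ forces $Q'_I \propto I$, and since $\ket{0:X}$ is the $+1$-eigenvector of $X$, the Paulis fixing $\bigotimes_{u \in \nin}\ket{0:X}_u$ are exactly the products of powers of $X_u$ for $u \in \nin$. So $Q' = X_A$ for some $A \in \Z_d^V$ with $A_v = 0$ whenever $v \in I$, and every such $X_A$ conversely fixes the product state.

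The main step, which is also the main obstacle, is computing $Q = E_G X_A E_G^\dagger$ in canonical form because of the nontrivial phase bookkeeping. Iterating the $E$--$X$ commutation rule over all edges incident to $u$ gives $E_G X_u E_G^\dagger = X_u Z_{G 1_u}$, and since $E_G \cdot E_G^\dagger$ is a homomorphism of the Pauli group, $Q = \prod_{u \in V} (X_u Z_{G 1_u})^{A_u}$. Loop-freeness ($G_{u,u} = 0$) makes $Z_{G 1_u}$ act trivially on $u$, so each factor simplifies to $X_u^{A_u} Z_{G 1_u}^{A_u}$. Reordering the $X$'s to the left via the Weyl relation $Z^a X^b = \omega^{ab} X^b Z^a$ collapses the product into $\omega^{\phi(A)} X_A Z_{GA}$ with $\phi(A) = \sum_{u<v} A_u A_v G_{u,v}$; symmetry of $G$ together with $G_{u,u} = 0$ then yields $2\phi(A) = \trans A G A$, and division by $2$ is well-defined in $\Z_d$ since $d$ is odd. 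Checking that this phase is independent of the chosen vertex ordering (i.e.\ that the quadratic form is genuinely symmetric in the $A_u$) is the most delicate point of the argument.
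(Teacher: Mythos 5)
Your proposal is correct and follows essentially the same route as the paper: reduce to the stabilisers of $\ket{\phi}\otimes\bigotimes_{u\in I^{\mathsf{c}}}\ket{0:X}$, identify these as the $X_A$ with $A$ vanishing on $I$, and conjugate by $E_G$. Your phase bookkeeping is in fact the more careful of the two --- the exponent $\sum_{u<v}A_uA_vG_{uv}=\tfrac{1}{2}\trans{A}GA$ (well-defined since $d$ is odd, and manifestly order-independent by symmetry of $G$) matches the statement, whereas the paper's own displayed computation ends with $\omega^{\trans{A}GA}$ and silently drops the factor of $\tfrac{1}{2}$.
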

\begin{proof}
  The stabilisers of \(\ket{G(\phi)}\) are simply the stabilisers of \(\ket{\phi}
  \bigotimes_{u \in I^\mathsf{c}} \ket{0:X}\) conjugated by \(E_G\).
  It is clear that the stabiliser group of \(\ket{0:X}\) is generated by
  \(X^m\) for all \(m \in \Z_d\).
  Since no Pauli stabilises every \(\ket{\psi} \in \mathcal{H}\), it follows
  that the stabiliser group of \(\ket{\phi} \bigotimes_{u \in I^\mathsf{c}}
  \ket{0:X}\) is of the form \(X_A\) for some \(A \in \Z_d^{V}\) such that \(A_v
  = 0\) if \(v \in I\).
  Now, we have
  \begin{align}
    E_G X_k^{A_k} E_G^\dagger
    = \prod\limits_{\substack{u,v\in V \\ u<v}} E_{u,v}^{G_{u,v}}  X_k^{A_k}
    \prod\limits_{\substack{u,v\in V \\ u<v}} E_{u,v}^{G_{u,v}} 
    = X_k^{A_k} \prod_{v \in V} Z_v^{G_{vk}A_k} = X_{A_k} Z_{GA_k}
  \end{align}
  so that
  \begin{align}
    E_G X_A E_G^\dagger
    &= E_G \prod_{k \in V} X_k^{A_k} E_G^\dagger
    = X_k^{A_k} \prod_{v \in V} Z_v^{G_{vk}A_k}\\
    &= \prod_{k \in V} X_{A_k} Z_{GA_k}
    = \omega^{\sum_{k \in V}A_k^\mathsf{T}G\sum_{k \in V}A_k} X_{\sum_{k \in V} A_k} Z_{\sum_{k \in V}GA_k}\\
    &= \omega^{A^\mathsf{T}GA} X_A Z_{GA}
  \end{align}
  as claimed.
\end{proof}

\section{\Fflow and outcome determinism}
\label{sec:flow}
This leads us to the statement of our novel flow condition.
It is a strict generalisation of gflow to qudits, which must take into
account the additional freedom in open graphs described above.
\begin{definition}
  $(G,I,O,\lambda)$ has an \emph{\Fflow} \((C,\Lambda)\) if \(C\) is a matrix in
  \(\Z_d^{V\times V}\) and \(\Lambda\) a totally ordered partition of \(V\)
  such that
  \begin{enumerate}
  \item $\forall u \in O^c$, $\lambda(u)=(C_{uu},(GC)_{uu})$;
  \item \(C[I,V] = 0\) and \(C[V,O] = 0\);
  \item for any \(M,N \in \Lambda\),
    \begin{itemize}
    \item \(C[M,M]\) and \((GC)[M,M]\) are diagonal;
    \item whenever \(M < N\), \(C[M,N] = (GC)[M,N] = 0\).
    \end{itemize}
  \end{enumerate}
  We call \(\Lambda\) a \emph{layer decomposition} of \((G,I,O,\lambda)\) for
  \(C\) and the elements of \(\Lambda\) are \emph{layers}.
\end{definition}

If \((G,I,O,\lambda)\) is a labelled open graph with \Fflow \((C,\Lambda)\),
then we obtain a runnable MBQC \((G,I,O,\lambda,\vb{x}_C,\vb{z}_C)\) by imposing
\begin{equation}
  \vb{x}_C(v) \coloneqq (C_{\bullet v} - \lambda(v)_1 1_{\{v\}})
  \qand \vb{z}_C(v) \coloneqq ((GC)_{\bullet v} - \lambda(v)_2 1_{\{v\}}),
  \label{eq:corrections_flow}
\end{equation}
where \(M_{\bullet v}\) is the \(v\)-th column of a matrix \(M\).

The layer decomposition \(\Lambda\) describes a (partial) measurement
order for the non-output qudits: the qudits can be measured in any
totalisation of the order induced on \(O^\mathsf{c}\) by the order of
\(\Lambda\), and qudits within the same layer can be measured simultaneously.
This order is a (not necessarily strict) extension of the order \(\prec\)
induced by \(\vb{x}_C,\vb{z}_C\) as described in the previous section.

The \(u\)-th columns (minus the \(u\)-th element) of \(C\) and \(GC\) describe
where to apply \(X\) and \(Z\) corrections for the measurement of vertex \(u \in
V\), respectively.
The elements in the \(u\)-th columns above \(u\) then correspond to qudits that
have already been measured, and must be zero for there to be no unwanted
back-action.
The elements in the \(u\)-th columns that corresponds to vertices in the same
layer as \(u\) must also be \(0\), since those vertices can be measured before
\(u\).
These considerations impose condition (iii).

Condition (ii) follows from the fact that the outputs are not measured and thus
no correction is needed.
Furthermore, we cannot apply \(X\) corrections at an input vertex, since the
measurement pattern we introduce relies on the fact that \(X_u N_u = N_u\).
Finally, the \(u\)-th element of the \(u\)-th column describes what correction
is applied at vertex \(u\), \(X^C_{uu} Z^{(GC)_{uu}}\), when we follow this
procedure.
This correction must match the measurement space assigned to \(u\) so that we
can use the back-action to perform the correction, which implies condition (i).

Then this MBQC is deterministic and implements an isometry:
\begin{theorem}
  Suppose the graph state \((G,I,O,\lambda)\) has \Fflow \((C,\Lambda)\), then
  the MBQC \((G,I,O,\lambda,\vb{x}_C,\vb{z}_C)\) is runnable and robustly
  deterministic. Furthermore, for a given choice of measurements \(\vb{M}\), it
  realises the isometry
  \begin{equation}
    \begin{aligned}
      \mathcal{H}^{\otimes I} &\longrightarrow \mathcal{H}^{\otimes O} \\
      \ket{\phi} &\longmapsto \bigotimes_{u \in O^\mathsf{c}} \bra{0 :
        \vb{M}(u)} E_G \left(\ket{\phi} \bigotimes_{u\in I^\mathsf{c}} \ket{0:X}
      \right)
    \end{aligned}.
  \end{equation}
  \label{thm:fflow_determinism}
\end{theorem}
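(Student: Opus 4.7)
The plan is to absorb each measurement correction into an open-graph stabiliser, reducing every branch map to the $\vec{m}=\vec{0}$ branch up to a global phase. Robust determinism then follows by applying the same argument to the sub-MBQC on any $\prec$-lowerset $S\subseteq O^\mathsf{c}$, with $O^\mathsf{c}\setminus S$ absorbed into the outputs.

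\textbf{Runnability.} Condition (iii) kills the $u$-th columns of $C$ and $GC$ on every $v$ with $\mathrm{layer}(v)<\mathrm{layer}(u)$ (via $C[M,N]=(GC)[M,N]=0$ for $M<N$) and on every $v\neq u$ in the same layer as $u$ (via diagonality of $C[M,M]$ and $(GC)[M,M]$). By \eqref{eq:corrections_flow}, the supports of $\vb{x}_C(u)$ and $\vb{z}_C(u)$ therefore lie in strictly later layers than $u$, so every correction is applied to a not-yet-measured qudit, which is what runnability requires.

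\textbf{Stabiliser rewriting of the branch map.} For each $u\in O^\mathsf{c}$, set $c_u:=C_{\bullet u}$; condition (ii) forces $c_u$ to vanish on $I$, so Proposition~\ref{prop:open_graph_stabilisers} yields a stabiliser $S_u := \omega^{\trans{c_u}Gc_u/2}X_{c_u}Z_{Gc_u}$ of $\ket{G(\phi)}$ for every input $\ket\phi$. Writing $Q_u := X^{\lambda(u)_1}Z^{\lambda(u)_2}$ and using condition (i), a short direct computation (exploiting that $Q^m\simeq X^{ma}Z^{mb}$ and that operators on distinct vertices commute) shows that $S_u^{m_u}$ splits, up to a phase, as $Q_u^{m_u}$ on $u$ tensored with $X_{\vb{x}_C(u)}^{m_u}Z_{\vb{z}_C(u)}^{m_u}$ on $V\setminus\{u\}$. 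From Proposition~\ref{prop:measurement_PVM} one gets $Q_u\ket{k:\vb{M}(u)}=\ket{k-1:\vb{M}(u)}$, whence $\bra{m_u:\vb{M}(u)}\simeq\bra{0:\vb{M}(u)}Q_u^{m_u}$. Substituting this into \eqref{eq:branch_map} turns each factor $X^{m_u}_{\vb{x}_C(u)}Z^{m_u}_{\vb{z}_C(u)}\bra{m_u:\vb{M}(u)}_u$ into $\bra{0:\vb{M}(u)}_u \cdot S_u^{m_u}$, up to a phase.

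\textbf{Sliding stabilisers onto the state.} It remains to commute each $S_u^{m_u}$ rightwards through the remaining operators and onto $\ket{G(\phi)}$. The projections $\bra{0:\vb{M}(v)}_v$ with $v\prec u$ or $v$ in the same layer as $u$ commute with $S_u^{m_u}$ because, by the runnability analysis, $S_u^{m_u}$ acts trivially on such $v$; Pauli-proportional operators commute with one another up to a phase. Once every $S_u^{m_u}$ reaches $\ket{G(\phi)}$ it acts as the identity, leaving $A_{\vec{m}}(\ket\phi)=\omega^{\Phi(\vec{m})}\bigotimes_{u\in O^\mathsf{c}}\bra{0:\vb{M}(u)}\,E_G\!\left(\ket\phi\otimes\bigotimes_{u\in I^\mathsf{c}}\ket{0:X}\right)$ with $\Phi(\vec m)$ independent of $\ket\phi$. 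This is the isometry in the statement and proves strong determinism; the same argument with $S$ in place of $O^\mathsf{c}$ gives robust determinism.

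\textbf{Main obstacle.} The real work is the honest bookkeeping of the $\omega$-phases produced when raising Paulis to the $m_u$-th power, separating the tensor factor on $u$, and reordering non-commuting Paulis. None of these phases depend on $\ket\phi$, so they collect into the single global factor $\omega^{\Phi(\vec{m})}$; making that consolidation precise is the technical crux.
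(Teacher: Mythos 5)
Your proposal is correct and follows essentially the same route as the paper: both arguments combine the measurement-space relation $\bra{m:\vb{M}(u)}\simeq\bra{0:\vb{M}(u)}Q_u^{m}$ with the fact that $X_{C_{\bullet u}}Z_{GC_{\bullet u}}$ is an open-graph stabiliser (Proposition~\ref{prop:open_graph_stabilisers}), so that the correction together with $Q_u^{m}$ acts as the identity on $\ket{G(\phi)}$ and every branch collapses to the $\vec m=\vec 0$ branch up to a phase. The only cosmetic difference is that the paper argues measurement-by-measurement (stepwise), whereas you rewrite all factors at once and slide the stabilisers through; the triangularity conditions justify the commutations in both presentations.
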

\begin{proof}
  Assume $(G,I,O,\lambda)$ has a \Fflow $(C,\Lambda)$. We perform the
  measurements in the order given by any totalisation of the order induced by
  \(\Lambda\) on \(V\). We measure qudit $u$ with a
  $M$-measurement, and we obtain a classical outcome $s_u\in \Z_d$. Let
  $Q_u:= X_u^{C_{uu}}Z_u^{(GC)_{uu}}$ then by lemma
  \ref{prop:measurement_space_unitary}, the action of any measurement in
  \(\measp{\lambda(u)}\) correspond to the application on qudit $u$ of the
  projector $\bra {m:M} = \bra{0:M}Q_u^{s_u}$. Thus a correction must consist in
  simulating the application of $Q^{-s_u}$ on $u$. The definition of \Fflow
  implies that \(C\) and \(GC\) must be lower triangular, so that $X_{(C\{u\})\setminus
    \{u\}}Z_{(GC\{u\})\setminus \{u\}}$ acts only on unmeasured qudits,
  where $A\setminus \{u\}$ removes all the occurrences of $u$ in $A$:
  \begin{equation}
    A\setminus \{u\} = v \mapsto
    \begin{cases}
      0 &\qif u=v\\
      A(v)&\text{otherwise}
    \end{cases}.
  \end{equation}
  Then we have that:
  \begin{align}
   X^{s_u}_{(C\{u\})\setminus
    \{u\}}Z^{s_u}_{(GC\{u\})\setminus \{u\}} \ket G
    &= X^{s_u}_{(C\{u\})\setminus \{u\}}Z^{s_u}_{(GC\{u\})\setminus \{u\}}Q_u^{s_u}Q_u^{-s_u} \ket G \\
    &= X^{s_u}_{C\{u\}}Z^{s_u}_{GC\{u\}}Q_u^{-s_u} \ket G = Q_u^{-s_u} \ket G.
  \end{align}
  As a consequence, the correction $X^{s_u}_{(C\{u\})\setminus
    \{u\}}Z^{s_u}_{(GC\{u\})\setminus \{u\}}$ is runnable and makes the
  computation uniformly deterministic.

  Since all the branch maps are equal, the computation is strongly
  deterministic, and since we have considered only a single measurement and the
  associated corrections, it is stepwise deterministic.

  In \cite{danos_determinism_2006} it was shown that if a measurement pattern is
  strongly deterministic then it implements an isometry.
  Since we correct each measurement to the outcome \(m = 0\), it is clear that
  the final isometry is given by 
  \(\mathcal{H}^{\otimes I} \to \mathcal{H}^{\otimes O} : \prod_{u \in
    O^\mathsf{c}} \bra{0 : M_u} E_G N_{I^\mathsf{c}}\) as claimed.
\end{proof}

\begin{figure}
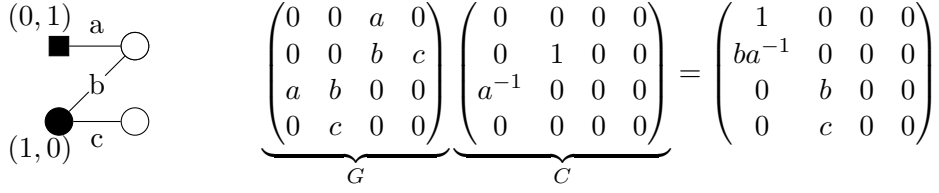

  \centering
  \tikzfig{figures/example-simple}
  \caption{An example of a labelled open graph (left) with corresponding \Fflow
    (right). The inputs of the open graph are square vertices, and the outputs are
    white. The labelling is written in parentheses next to the unmeasured vertices.
    The edge weights can take any values in \(\F\) with the only constraint
    being that \(a\) must be invertible. We measure the input vertex before the
    auxiliary non-output, which gives the corresponding layer decomposition.}
  \label{fig:example}
\end{figure}

\subsection{Recovering other flow conditions}

gflow was originally formulated in terms of a partial order on the vertices to be
measured \cite{browne_generalized_2007}.
It is easy to see that the order of the layer decomposition \(\Lambda\)
induces a (non-unique) partial order on the vertices \(V\).
Given a partial order \(\prec\) on the vertices \(V\), then there is of course
a (also non-unique) ordered partition of \(V\) that agrees with \(\prec\).
Since either of these orders are only used to describe the measurement order
for the vertices of the graph, we can write the \Fflow condition in terms
which are closer to \cite{browne_generalized_2007} (stated without proof):
\begin{lemma}[Partial order \Fflow]
  \((G,I,O,\lambda)\) has a \emph{\Fflow} if and only if there exists a matrix
  \(C \in \Z_d^{V \times V}\) and a partial order \(\prec\) on \(V\) such that
  \begin{enumerate}
  \item $\forall u \in O^c$, $\lambda(u)=(C_{uu},(GC)_{uu})$;
  \item $C_{uv} =0$ whenever $u\in I$ or $v\in O$;
  \item when the columns and rows of \(G\) and \(C\) are ordered according to
    any totalisation of \(\prec\), $C$ and $GC$ are lower triangular.
  \end{enumerate}
  \label{lem:order_flow}
\end{lemma}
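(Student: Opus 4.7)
The plan is to prove the equivalence by observing that the two formulations carry the same data up to how the measurement order is bookkept: a totally ordered partition $\Lambda$ induces a strict partial order via its layer indices, while a strict partial order $\prec$ is extended to a totally ordered partition by taking singleton layers along any linearisation. Conditions (i) and (ii) of the lemma match their counterparts in the \Fflow definition verbatim, so the only substantive work is translating the layer-block constraints of condition (iii) of the definition into the lower-triangularity of the lemma, and back.

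For the forward direction, start with an \Fflow $(C,\Lambda)$ and define $u \prec v$ to hold whenever the layer containing $u$ strictly precedes the layer containing $v$ in $\Lambda$; this is a strict partial order on $V$. Fix an arbitrary totalisation of $\prec$ and order the rows and columns of $G$ and $C$ accordingly. If $C_{uv}$ lies strictly above the diagonal, then $u$ precedes $v$ in the totalisation; either $u$ and $v$ belong to distinct layers $M < N$, in which case $C[M,N] = (GC)[M,N] = 0$ forces $C_{uv} = (GC)_{uv} = 0$, or they belong to a common layer $M$ with $u \neq v$, in which case the diagonality of $C[M,M]$ and $(GC)[M,M]$ gives the same conclusion. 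Hence $C$ and $GC$ are lower triangular under every totalisation of $\prec$.

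For the backward direction, start with $(C,\prec)$ satisfying the three conditions of the lemma, fix any totalisation $v_1 < v_2 < \cdots < v_n$ of $\prec$, and take $\Lambda = \{\{v_1\}, \{v_2\}, \ldots, \{v_n\}\}$ equipped with the induced total order. This $\Lambda$ is trivially a totally ordered partition. Each singleton block $C[\{v_i\},\{v_i\}]$ and $(GC)[\{v_i\},\{v_i\}]$ is $1 \times 1$ and hence diagonal, and for $i < j$ the entry $C_{v_i v_j}$ (respectively $(GC)_{v_i v_j}$) lies strictly above the diagonal under the chosen totalisation and therefore vanishes by hypothesis. Condition (iii) of the \Fflow definition follows.

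The main obstacle is really just orientation bookkeeping: getting rows versus columns and the ``earlier measured first'' convention aligned so that lower-triangular matches the semantic requirement that corrections for vertex $v$ act only on qudits still unmeasured when $v$ is measured. The backward construction is deliberately the coarsest possible (singletons); finer choices, such as grouping by $\prec$-rank into antichain layers, would work just as well, but singletons make the verification immediate. Together the two directions show that $\Lambda$ and $\prec$ are interchangeable ways of packaging the measurement order while the matrix $C$ is the same in both formulations.
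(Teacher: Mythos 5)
Your proof is correct. The paper states this lemma without proof (it explicitly treats the translation as routine), and your argument --- inducing $\prec$ from the layer order of $\Lambda$ in one direction, taking singleton layers along a linearisation in the other, and observing that diagonality of the within-layer blocks $C[M,M]$, $(GC)[M,M]$ is precisely what upgrades lower-triangularity under one totalisation to lower-triangularity under \emph{every} totalisation of $\prec$ --- is exactly the intended bookkeeping.
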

\vspace{-3mm}

It is straightforward from this formulation to recover the gflow condition,
since the parity conditions in the original formulation correspond to linear
equations over \(\Z_2\) (also stated without proof):
\begin{proposition}
  An open \(\Z_2\)-graph \((G,I,O,\lambda)\) has a gflow if and only if it has a
  \(\Z_2\)-flow.
\end{proposition}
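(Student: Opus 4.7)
The plan is to identify a \(\Z_2\)-flow with a gflow via the correspondence that sends a matrix \(C \in \Z_2^{V \times V}\) to the function \(g : O^c \to 2^V\) whose value \(g(v)\) is the subset of \(V\) whose indicator vector is the column \(C_{\bullet v}\). Since arithmetic is mod \(2\) and \(G\) is the adjacency matrix, the column \((GC)_{\bullet v}\) is then precisely the indicator vector of the odd neighbourhood \(\mathrm{Odd}(g(v)) = \{w \in V : |N(w) \cap g(v)| \equiv 1 \pmod 2\}\) which appears in the standard gflow conditions. Under this dictionary I would translate the three conditions of lemma \ref{lem:order_flow} one by one and compare them with the three conditions of the gflow definition of \cite{browne_generalized_2007}.

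First, the set \(\Z_2^2 \setminus \{(0,0)\}\) has three elements, which by proposition \ref{prop:measurement_PVM} correspond to the three qubit Pauli measurement planes \(XY\), \(YZ\), \(XZ\). Condition (i) of lemma \ref{lem:order_flow}, namely \(\lambda(u) = (C_{uu}, (GC)_{uu})\), thus unpacks into three cases: for \(\lambda(u) = (0,1)\) (plane \(XY\), \(Z\)-correction) it reads \(u \notin g(u)\) and \(u \in \mathrm{Odd}(g(u))\); for \(\lambda(u) = (1,0)\) (plane \(YZ\), \(X\)-correction), \(u \in g(u)\) and \(u \notin \mathrm{Odd}(g(u))\); and for \(\lambda(u) = (1,1)\) (plane \(XZ\), \(Y\)-correction), \(u \in g(u)\) and \(u \in \mathrm{Odd}(g(u))\). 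These are exactly the three plane-wise constraints in the gflow definition. Condition (ii) translates into \(g(v) \cap I = \emptyset\) for all \(v\) and \(g\) being defined only on \(O^c\), matching the input/output conventions of gflow. Condition (iii) says that for some totalisation \(<\) of a partial order on \(V\) (coming from the layer decomposition \(\Lambda\)), both \(C\) and \(GC\) are lower triangular; unpacking \(C_{ij}=0\) and \((GC)_{ij}=0\) for \(i<j\) in indicator form gives that whenever \(w \in g(u) \cup \mathrm{Odd}(g(u))\) with \(w \neq u\), one has \(u \prec w\), which is exactly the order condition of gflow.

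The two directions of the equivalence then follow immediately: from a gflow \((g,\prec)\) I would set \(C_{\bullet v}\) to be the indicator vector of \(g(v)\) (and zero on outputs) and take any layer decomposition refining \(\prec\); from a \(\Z_2\)-flow \((C,\Lambda)\) I would define \(g(v)\) as the support of \(C_{\bullet v}\) and use the partial order induced by \(\Lambda\) on \(V\). The proof has no serious obstacle beyond this bookkeeping: the real content, that indicator vectors in \(\Z_2^V\) and subsets of \(V\) are interchangeable and that left-multiplication by the adjacency matrix computes odd neighbourhoods, is immediate over \(\Z_2\). The only point that deserves a word is the interchange between totalisations (used in the \(\Z_2\)-flow formulation) and partial orders (used in the gflow formulation), which is standard: any totalisation witnessing condition (iii) restricts to a valid gflow partial order, and any gflow partial order admits a totalisation which yields a valid layer decomposition.
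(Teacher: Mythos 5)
The paper states this proposition without proof, remarking only that it is ``straightforward from this formulation'' since the parity conditions of gflow are linear equations over \(\Z_2\); your argument is exactly that intended translation, carried out correctly. In particular your dictionary \(C_{\bullet v}\leftrightarrow g(v)\), \((GC)_{\bullet v}\leftrightarrow \mathrm{Odd}(g(v))\), the three-way case analysis of \(\lambda(u)=(C_{uu},(GC)_{uu})\) against the \(XY\)/\(YZ\)/\(XZ\) plane conditions, and the remark about passing between totalisations and partial orders are all accurate, so there is nothing to add.
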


Although the semantics are subtly different, we also can also recover CV-flow
\cite{booth_flow_2021} as a special case of our definition: 
\begin{proposition}
  An open \(\R\)-graph \((G,I,O,\lambda)\) such that \(\lambda(O^\mathsf{c}) =
  \{(0,1)\}\) has a CV-flow if and only if it has an \(\R\)-flow.
\end{proposition}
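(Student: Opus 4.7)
The plan is to set up an explicit dictionary between CV-flow data, as defined in \cite{booth_flow_2021} by a correction function \(g : O^\mathsf{c} \to \R^{V}\) together with a partial order \(\prec\) on \(V\), and \(\R\)-flow data given by a matrix \(C \in \R^{V \times V}\) with a layer decomposition \(\Lambda\). The natural identification reads the columns of \(C\) as values of \(g\): set \(C_{\bullet u} := g(u)\) for \(u \in O^\mathsf{c}\) and \(C_{\bullet u} := 0\) for \(u \in O\). The specialisation \(\lambda(u) = (0,1)\) freezes the only extra freedom in \(\R\)-flow as compared to the CV setting, namely the per-vertex choice of correction axis, to match the single \(X\)-type correction already used in CV-MBQC.

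For the forward direction (CV-flow \(\Rightarrow\) \(\R\)-flow), the three \Fflow conditions are checked in turn against this dictionary. Condition (i), \((C_{uu},(GC)_{uu}) = (0,1)\), is precisely the CV-flow fixed-point constraints \(g(u)_u = 0\) and \((Gg(u))_u = 1\). Condition (ii) reflects the CV-flow support constraints (no correction originating at an input or applied at an output). For condition (iii) I would invoke lemma \ref{lem:order_flow} to work with a partial order rather than a layer decomposition: the lower-triangularity of \(C\) and \(GC\) with respect to any totalisation of \(\prec\) then coincides with the CV-flow requirement that the off-diagonal supports of \(g(u)\) and \(Gg(u)\) lie strictly above \(u\) in \(\prec\). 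The reverse direction is obtained by reading the same identification backwards: given \((C,\Lambda)\), define \(g(u) := C_{\bullet u}\) and take \(\prec\) to be any partial order induced by \(\Lambda\); each of the three CV-flow axioms then transcribes to one of the three \Fflow axioms.

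The main obstacle is not mathematical but notational: one must carefully match the exact phrasing of CV-flow in \cite{booth_flow_2021}, in particular the normalisation \((Gg(u))_u = 1\) (which fixes a global scaling of each \(g(u)\)) and the handling of inputs, with the specialisation of \(\R\)-flow to \(\lambda \equiv (0,1)\). Once this dictionary is pinned down, no nontrivial algebraic or combinatorial content remains, which is consistent with the proposition being stated without proof.
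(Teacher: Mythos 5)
Your dictionary (columns of \(C\) as the correction functions \(g(u)\), lemma \ref{lem:order_flow} to pass between the layer decomposition and the partial order, and the observation that \(\lambda\equiv(0,1)\) pins \((C_{uu},(GC)_{uu})\) to the CV normalisation) is correct and is exactly the routine verification the paper relies on --- the proposition is stated without proof, with the preceding gflow case handled the same way via the partial-order reformulation. The only quibbles are cosmetic: in condition (ii), \(C[I,V]=0\) says corrections are never \emph{applied at} inputs and \(C[V,O]=0\) says outputs have no correction column (i.e.\ \(g\) has domain \(O^{\mathsf{c}}\)), so your "originating at an input / applied at an output" gloss swaps rows and columns; and the paper's caveat that "the semantics are subtly different" concerns the continuous-variable physical interpretation, not the combinatorial equivalence you prove, so it does not affect your argument.
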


\subsection{The converse result}

It has been shown in the qubit case that any measurement pattern that is
robustly deterministic is such that the underlying open graph has a gflow
\cite{browne_generalized_2007}. We generalise this result to
the case of qudits:
\begin{theorem}
  If \((G,I,O,\lambda,\vb{x,z})\) is a robustly deterministic
  MBQC on \(\mathbb{Z}_d\), then the underlying open \(\F\)-graph
  \((G,I,O,\lambda)\) has an \Fflow \((C,\Lambda)\) such that
  \(\vb{x} = \vb{x}_C\) and \(\vb{z} = \vb{z}_C\).
  \label{thm:converse}
\end{theorem}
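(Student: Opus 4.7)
The plan is to construct the flow matrix $C$ column-by-column from the correction functions $\vb{x}, \vb{z}$, using robust determinism to force each column to be the $X$-support of an open graph stabiliser, and then to take $\Lambda$ to be the canonical layering of the causal order already induced by $\vb{x}, \vb{z}$. Concretely, let $\prec$ be the strict partial order on $V$ induced by $\vb{x}, \vb{z}$, fix $u \in O^\mathsf{c}$, set $S_u := \{v \in O^\mathsf{c} : v \prec u\} \cup \{u\}$ (a $\prec$-lowerset), and write $Q_u := X^{\lambda(u)_1} Z^{\lambda(u)_2}$. By robust determinism, the restricted MBQC on $S_u$ is strongly deterministic for every choice of measurements. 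Since $u$ is maximal in $S_u$, its correction acts only on qudits outside $S_u$, so comparing the branch with $m_u = s$ and $m_v = 0$ for $v \prec u$ to the all-zero branch, and using $\bra{s:M_u} = \bra{0:M_u} Q_u^s$ to move the outcome dependence past the correction, yields the scalar identity
\[
  \bra{0:M_u}_u \prod_{v \prec u} \bra{0:M_v}_v \bigl( Q_u^s X_{\vb{x}(u)}^s Z_{\vb{z}(u)}^s - \lambda_s \bigr) \ket{G(\phi)} = 0
\]
for every $\ket{\phi} \in \mathcal{H}^{\otimes I}$ and some phase $\lambda_s$ supplied by strong determinism.

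Next, I would lift this to a genuine stabiliser equation by letting the measurements vary within their measurement spaces. Proposition \ref{prop:measurement_space_unitary} parametrises fixpoints as $\ket{0:M} = U\ket{0:N}$ with $[U,Q_u]=0$ and $\det U = 1$; moreover, a fixed $\ket{0:N}$ must have non-zero $Q_u$-eigenbasis coefficients, since otherwise $\{Q_u^{-k}\ket{0:N}\}_k$ would fail to span $\mathcal{H}$. A short tangent-space argument then shows that the orbit $\{\ket{0:M} : M \in \measp{\lambda(u)}\}$ linearly spans $\mathcal{H}$. Applying this at $u$ and at each $v \prec u$ collapses the projectors and leaves $Q_u^s X_{\vb{x}(u)}^s Z_{\vb{z}(u)}^s \ket{G(\phi)} = \lambda_s \ket{G(\phi)}$ for every input $\ket{\phi}$. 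Taking $s = 1$ and applying Proposition \ref{prop:open_graph_stabilisers} produces $A_u \in \Z_d^V$ with $A_u|_I = 0$, and matching $X$ and $Z$ exponents, using the convention $\vb{x}(u)_u = \vb{z}(u)_u = 0$, gives $A_u = \vb{x}(u) + \lambda(u)_1 1_{\{u\}}$ and $GA_u = \vb{z}(u) + \lambda(u)_2 1_{\{u\}}$.

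Defining $C_{\bullet u} := A_u$ for $u \in O^\mathsf{c}$ and $C_{\bullet u} := 0$ for $u \in O$ makes conditions (i) and (ii) of \Fflow and the identities $\vb{x} = \vb{x}_C$, $\vb{z} = \vb{z}_C$ immediate. For $\Lambda$ I would take the canonical peeling of $\prec$: $L_1 := \min_\prec V$ and $L_{k+1} := \min_\prec \bigl( V \setminus \bigcup_{j \le k} L_j \bigr)$, ordered by $L_1 < L_2 < \cdots$. Runnability of $(\vb{x}, \vb{z})$ implies that for $u \neq v$, $C_{uv} \neq 0$ forces $v \prec u$ (and similarly for $GC$); by the peeling, $u \in L_k$, $v \in L_m$ with either $k < m$, or $k = m$ and $u \neq v$, lie in positions where $v \not\prec u$, so $C_{uv} = (GC)_{uv} = 0$ and condition (iii) follows.

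The main obstacle is the spanning step in the second paragraph: passing from the scalar identity to a genuine stabiliser equation requires that $\{\ket{0:M}\}$ linearly spans $\mathcal{H}$ as $M$ ranges over $\measp{a,b}$, which must be established via the parametrisation of Corollary \ref{cor:measurement_angles} together with the non-vanishing of fixpoint coefficients in the correction-Pauli eigenbasis. Once this spanning property is in place, the remainder reduces to the stabiliser calculus of Proposition \ref{prop:open_graph_stabilisers} and the routine combinatorics of the canonical layering of a finite poset.
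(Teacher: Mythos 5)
Your overall architecture --- peel off the $\prec$-maximal measured vertex $u$, use robust determinism on the lowerset below it, identify $X_{\vb{x}(u)}Z_{\vb{z}(u)}Q_u$ as an open-graph stabiliser via Proposition \ref{prop:open_graph_stabilisers}, read off the column $C_{\bullet u}$, and layer by the canonical peeling of $\prec$ --- matches the paper's. The gap is exactly at the step you flag as ``the main obstacle'', and it is not closed by the spanning property you propose. Strong determinism gives you, for each choice of measurements $\vb{M}$, a scalar identity whose phase $\lambda_s = \lambda_s(\vb{M})$ depends on $\vb{M}$. The family of identities
\begin{equation*}
  \bra{\vec{0}:\vb{M}}\bigl(Q_u^s X_{\vb{x}(u)}^s Z_{\vb{z}(u)}^s - \lambda_s(\vb{M})\bigr)\ket{G(\phi)} = 0
\end{equation*}
is therefore not a linear system in the bras, so the fact that the fixpoints $\ket{0:M}$ span $\mathcal{H}$ does not let you ``collapse the projectors''. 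A one-qudit counterexample: for $Q = Z$, every fixpoint $\ket{0:M}$ with $M \in \measp{0,1}$ is unbiased with respect to the computational basis, so any two distinct $Z$-eigenstates $\ket{x}$ and $\ket{y}$ satisfy $\ip{m:M}{x} \simeq \ip{m:M}{y}$ with equal norms $1/\sqrt{d}$ for every such $M$ and every $m$, yet they are not proportional --- and these fixpoints certainly span $\mathcal{H}$. This is precisely case (2) of Lemma \ref{lem:local_tomography}, and it is the same oversight the paper notes was present in the original gflow argument and was only later repaired.

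What is actually required is Lemma \ref{lem:local_tomography_multiple}: agreement of all amplitudes up to measurement-dependent phases, together with the uniform-norm condition $\norm{\bra{\vec{m}:\vb{M}}\ket{G(\phi)}} = d^{-|R|/2}$ (which you should also extract from strong determinism), implies either proportionality or a factorisation of both states into $Q$-eigenstates on some subset $L$ of the measured registers. The paper then needs a further non-trivial step to dispose of the second alternative: it shows $x_n = y_n$ for all $n \in L$ by a case split on whether $u \in L$, using that the corrections act only on outputs. Your proposal has no counterpart to either of these steps, so as written the passage from branch equality to the stabiliser equation does not go through. The remainder of your argument --- extraction of $C_{\bullet u}$ from Proposition \ref{prop:open_graph_stabilisers}, the identities $\vb{x} = \vb{x}_C$ and $\vb{z} = \vb{z}_C$, and condition (iii) via the canonical layering of $\prec$ --- is sound once the stabiliser equation is secured.
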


Our proof of theorem \ref{thm:converse} relies crucially on the following lemma,
which has a technical proof left to appendix \ref{app:local_tomography}:
\begin{restatable}{lemma}{LemLocalTomographyMultiple}
  Let \((G,I,O,\lambda)\) be an open graph, \(\ket{\phi},\ket{\phi'} \in
  (\mathcal{H}^{\otimes V})_1\) and \(R \subseteq V\).
  For any \(\vb{M}:R \to U(\mathcal{H})\) such that, for each \(u \in V\),
  \(\vb{M}(u) \in \mathcal{M}(\lambda(u))\), and \(\vec{m} \in \F^R\), put
  \(\ket{\vec{m}:\vb{M}} = \bigotimes_{r \in R} \ket{m_r : \vb{M}(r)}\).
  If, for every such \(\vec{m}\) and \(\vb{M}\), we have
  \begin{equation}
    \bra{\vec{m}:\vb{M}} \ket{\phi} \simeq \bra{\vec{m}:\vb{M}} \ket{\phi'} \quad
    \text{and} \quad \norm{\bra{\vec{m}:\vb{M}} \ket{\phi}} = \frac{1}{\sqrt{d^{|R|}}} =
    \norm{\bra{\vec{m}:\vb{M}} \ket{\phi'}},
  \end{equation}
  then there is a subset \(L \subseteq R\), \(\vec{x},\vec{y} \in \F^L\) and
  \(\ket{\psi} \in \mathcal{H}^{\otimes V \setminus L}\) such that
  \begin{equation}
    \ket{\phi} \simeq \ket{\psi} \bigotimes_{n \in L} \ket{x_n : Q_n} \quad \text{and} \quad \ket{\phi'} \simeq \ket{\psi} \bigotimes_{n \in L} \ket{y_n : Q_n}.
  \end{equation}
  \label{lem:local_tomography_multiple}
\end{restatable}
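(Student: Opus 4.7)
The plan is to induct on $|R|$. The base case $|R|=0$ is immediate: the hypothesis reduces to $\ket\phi\simeq\ket{\phi'}$, so take $L=\emptyset$ and $\ket\psi=\ket\phi$. For the inductive step, fix any $r\in R$, let $Q_r\coloneqq X^{\lambda(r)_1}Z^{\lambda(r)_2}$, and decompose
\[
\ket\phi=\sum_{k\in\Z_d}\ket{k:Q_r}_r\otimes\ket{\phi_k},\qquad \ket{\phi'}=\sum_{k\in\Z_d}\ket{k:Q_r}_r\otimes\ket{\phi'_k},
\]
with $\ket{\phi_k},\ket{\phi'_k}\in\mathcal H^{V\setminus\{r\}}$.

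Using the uniform-norm hypothesis, the first step is to show that each family $\{\ket{\phi_k}\}_k$ and $\{\ket{\phi'_k}\}_k$ is mutually orthogonal, i.e.\ the reduced state on $\{r\}$ is diagonal in the $Q_r$-eigenbasis. This is a Fourier argument: the $(d-1)$-parameter family of angles parametrising $\vb M(r)\in\measp{\lambda(r)}$ (Corollary \ref{cor:measurement_angles}) is rich enough that any off-diagonal entry of $\rho_r$ would yield an oscillating term spoiling the uniform marginal probabilities. Next, Proposition \ref{prop:measurement_PVM} and mutual unbiasedness give $|\alpha_k|=1/\sqrt d$ where $\alpha_k\coloneqq\bra{m_r:\vb M(r)}\ket{k:Q_r}$, and varying $m_r$ and the angles lets us prescribe the phases of $\vec\alpha$ essentially freely. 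Introducing Schmidt-type operators $A,A'\colon\mathcal H_r\to\mathcal H^{V\setminus\{r\}}$ by $A\ket{k:Q_r}=\ket{\phi_k}$ and $A'\ket{k:Q_r}=\ket{\phi'_k}$, the ray hypothesis translates to $A\ket v\simeq A'\ket v$ for every $\ket v=\sum_k\bar\alpha_k\ket{k:Q_r}$ arising from an allowed $(m_r,\vb M(r))$.

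The heart of the argument is a case split on $\operatorname{rank}A$ and $\operatorname{rank}A'$. If $\operatorname{rank}A=1$ then exactly one $\ket{\phi_k}$ is non-zero, so $\ket\phi=\ket{x_r:Q_r}\otimes\ket{\tilde\phi}$; plugging this into the ray condition together with the phase-rich $\vec\alpha$ forces each $\ket{\phi'_k}$ to be a scalar multiple of the same $\ket{\tilde\phi}$, and orthogonality leaves at most one non-zero scalar, so $\ket{\phi'}\simeq\ket{y_r:Q_r}\otimes\ket{\tilde\phi}$. Applying the inductive hypothesis to $\ket{\tilde\phi}$ against itself on $R\setminus\{r\}$ yields a subset $L'$, and we take $L=\{r\}\cup L'$. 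If instead $\operatorname{rank}A,\operatorname{rank}A'\geq 2$, then at any base point $\ket{v_0}$ one shows $A=\mu_0 A'+\ket w\bra u$ with $\bra u\ket{v_0}=0$ and $\ket w$ in the ray of $A'\ket{v_0}$; comparing this with the analogous relation at a second base point $\ket{v_1}$ forces $\mu_0$ to be constant and $\bra u$ to annihilate every allowed $\ket v$, so $\bra u=0$ once one verifies that those vectors span $\mathcal H_r$. The outcome is $A=\lambda A'$ with $|\lambda|=1$, hence $\ket\phi\simeq\ket{\phi'}$, and $L=\emptyset$ works.

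The main obstacle is that second case, which requires a miniature Wigner-type rigidity argument: the phase-rich but norm-constrained family of admissible $\ket v$ must simultaneously span $\mathcal H_r$ (checked by explicit basis extraction using the $d$ values of $m_r$) and be rigid enough that two rank-$\leq 1$ perturbations at different base points cannot accommodate distinct scalars when $\operatorname{rank}A'\geq 2$. A secondary issue is that the inductive step in the first case needs the factored state $\ket{\tilde\phi}$ to inherit both the uniform-norm and ray hypotheses on $R\setminus\{r\}$, which follows cleanly from $|\alpha_{x_r}|=1/\sqrt d$.
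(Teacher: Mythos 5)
Your inductive step rests on a translation of the hypothesis that is not valid. You claim the assumption ``translates to $A\ket v\simeq A'\ket v$ for every allowed $\ket v$'', i.e.\ that projecting onto an eigenvector of a measurement of the single qudit $r$ \emph{alone} already yields proportional residual states. But the lemma's hypothesis only asserts proportionality of the projections onto \emph{product} eigenvectors over all of $R$. The norm half of the hypothesis does marginalise correctly (summing the squared norms of the full-$R$ projections over the outcomes on $R\setminus\{r\}$ gives $\norm{\bra{m_r:\vb{M}(r)}\ket{\phi}}^2=1/d$, so your step 1, orthogonality of the $\ket{\phi_k}$, is fine), but the proportionality half does not survive marginalisation. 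Concretely, take $R=\{1,2\}$ plus a third, unprojected qudit, and
\begin{equation*}
  \ket{\phi}=\ket{x_1:Q_1}\otimes\ket{x_2:Q_2}\otimes\ket{\psi},
  \qquad
  \ket{\phi'}=\ket{y_1:Q_1}\otimes\ket{y_2:Q_2}\otimes\ket{\psi},
\end{equation*}
with $x_1\neq y_1$ and $x_2\neq y_2$, where $Q_u$ is the Pauli attached to $\lambda(u)$. By mutual unbiasedness, every projection $\bra{\vec{m}:\vb{M}}\ket{\phi}$ equals $\ket{\psi}$ times a scalar of modulus $1/d$, and likewise for $\ket{\phi'}$, so the hypothesis of the lemma holds (and its conclusion holds, with $L=\{1,2\}$). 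Yet $\bra{m_1:\vb{M}(1)}\ket{\phi}\propto\ket{x_2:Q_2}\otimes\ket{\psi}$ and $\bra{m_1:\vb{M}(1)}\ket{\phi'}\propto\ket{y_2:Q_2}\otimes\ket{\psi}$ are orthogonal, not proportional: your ray hypothesis is false here. Worse, the dichotomy your inductive step derives from it fails on this instance: $\operatorname{rank}A=\operatorname{rank}A'=1$, but there is no common $\ket{\tilde\phi}$ (your case 1), and $\ket{\phi}\not\simeq\ket{\phi'}$ (your case 2). So the step as written proves a false intermediate statement; this is a structural flaw, not a missing computation.

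The repair essentially forces you into the paper's arrangement of the induction: apply the single-qudit dichotomy not to $\ket{\phi},\ket{\phi'}$ themselves, but to the pair of already-projected states $\braket{\vec{m}:\vb{M}}{\phi}$ and $\braket{\vec{m}:\vb{M}}{\phi'}$ (projected over all of $R$), peeling off a fresh qudit $u\in V\setminus R$; for those states, the hypothesis for the larger set $R\cup\{u\}$ supplies exactly the single-qudit proportionality and norm conditions you want. The cost, which your sketch never has to face, is that the branch of the dichotomy and the eigenvalue $x$ in the product branch may a priori depend on $(\vec{m},\vb{M})$; one must expand $\ket{\phi}$ in a common eigenbasis and kill cross terms to show that a single $x$ serves for all branches before the inductive hypothesis on $R$ can be invoked. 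Your single-qudit ingredients in themselves (the Fourier argument for a diagonal reduced state, mutual unbiasedness, and the rank-$\geq 2$ rigidity, which can indeed be completed by differentiating in the phase parameters and a spanning argument) are sound, but in your proof they are fed an input that the hypothesis does not provide.
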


In the original gflow article \cite{browne_generalized_2007}, this lemma was not
taken into account in full generality, the proof of the converse result is
however fixed in \cite{perdrix_one_nodate} for the qubit case.

\begin{proof}[Proof of theorem \ref{thm:converse}]
  Let \(\prec\) be the order on \(O^\mathsf{c}\), and consider the last
  measurement made according to some totalisation of \(\prec\).
  Suppose it is made at vertex \(u\).
  Let \(\vb{M} : O^\mathsf{c} \to U(\mathcal{H})\) be such that \(\vb{M}(v) \in
  \mathcal{M}(\lambda(v))\) for all \(v \in O^\mathsf{c}\).
  Performing the measurement with outcome \(m\), there is a corresponding
  correction \(X_{\vb{x}(u)}^m Z_{\vb{z}(u)}^m\) that acts only on outputs, and
  which induces the branch map:
  \begin{align}
    \ket{G(\phi)} \longmapsto
    &X_{\vb{x}(u)}^m Z_{\vb{z}(u)}^m \left( \bra{m:\vb{M}(u)} \bigotimes_{v \in O^\mathsf{c} \setminus \{u\}} \bra{0:\vb{M}(v)} \right) \ket{G(\phi)} \\
    &= X_{\vb{x}(u)}^m Z_{\vb{z}(u)}^m \left( \bigotimes_{v \in O^\mathsf{c}} \bra{0:\vb{M}(v)} \right) Q_u^m \ket{G(\phi)}, \\
    &= \left( \bigotimes_{v \in O^\mathsf{c}} \bra{0:\vb{M}(v)} \right) X_{\vb{x}(u)}^m Z_{\vb{z}(u)}^m Q_u^m \ket{G(\phi)}\\
    &= \bigotimes_{v \in O^\mathsf{c}} \ip*{0:\vb{M}(v)}{G(\phi)}.
  \end{align}

  By the uniformity condition, this equation is true for any choice of
  measurements in \(\prod_{v \in O^\mathsf{c}} \mathcal{M}(\lambda(v))\).
  In particular, for any \(\vb{M}(v) \in \mathcal{M}(\lambda(v))\), by
  proposition \ref{prop:measurement_space_unitary} we have \(Q_v^{-m} \vb{M}(v)
  Q_v^m \in \mathcal{M}(\lambda(v))\), and
  \begin{equation}
    \bra{0:Q_v^{-m} \vb{M}(v) Q_v^m} = \bra{0:\vb{M}(v)} Q_v^m = \bra{m:\vb{M}(v)}.
  \end{equation}
  It follows that for any choice of measurements \(\vb{M}\) and any \(\vec{m}
  \in \Z_d^{O^\mathsf{c}}\),
  \begin{equation}
    \bra{\vec{m}:\vb{M}} X_{\vb{x}(u)}^m Z_{\vb{z}(u)}^m Q_u^m \ket{G(\phi)}
    = \ip*{\vec{m}:\vb{M}}{G(\phi)},
  \end{equation}
  so by lemma \ref{lem:local_tomography_multiple}, there is a subset \(L
  \subseteq O^\mathsf{c}\), vectors \(\vec{x},\vec{y} \in \Z_d^{|L|}\) and a
  state \(\ket{\psi} \in \mathcal{H}^{\otimes V \setminus L}\) such that
  \begin{equation}
    X_{\vb{x}(u)}^m Z_{\vb{z}(u)}^m Q_u^m \ket{G(\phi)} \simeq \ket{\psi} \bigotimes_{n \in L} \ket{x_n : Q_n}
    \qand \ket{G(\phi)} \simeq \ket{\psi} \bigotimes_{n \in L} \ket{y_n : Q_n}.
  \end{equation}
  Then,
  \begin{align}
    X_{\vb{x}(u)}^m Z_{\vb{z}(u)}^m Q_u^m \ket{\psi} \bigotimes_{n \in L} \ket{x_n : Q_n} &\simeq X_{\vb{x}(u)}^m Z_{\vb{z}(u)}^m Q_u^m \ket{G(\phi)} \\
    &\simeq \ket{G(\phi)} \simeq \ket{\psi} \bigotimes_{n \in L} \ket{y_n : Q_n}.
  \end{align}

  If \(u \notin L\), then since the corrections only act on outputs this implies
  that
  \begin{equation}
    (X_{\vb{x}(u)}^m Z_{\vb{z}(u)}^m Q_u^m \ket{\psi}) \bigotimes_{n \in L} \ket{x_n : Q_n} \simeq
    \ket{\psi} \bigotimes_{n \in L} \ket{y_n : Q_n}, 
  \end{equation}
  so we must have \(x_n = y_n\) for all \(n \in L\).

  If \(u \in L\), 
  \begin{align}
    (X_{\vb{x}(u)}^m Z_{\vb{z}(u)}^m Q_u^m \ket{\psi}) &\bigotimes_{n \in L} \ket{x_n : Q_n} \\
    &\simeq
      (X_{\vb{x}(u)}^m Z_{\vb{z}(u)}^m \ket{\psi}) \otimes Q_u^m \ket{x_u:Q_u} \bigotimes_{n \in L \setminus \{u\}} \ket{x_n : Q_n}, \\
    &\simeq
      (X_{\vb{x}(u)}^m Z_{\vb{z}(u)}^m \ket{\psi}) \otimes \omega^{mx_u} \ket{x_u:Q_u} \bigotimes_{n \in L \setminus \{u\}} \ket{x_n : Q_n}, \\
    &\simeq
    \ket{\psi} \otimes \ket{y_u:Q_u} \bigotimes_{n \in L \setminus \{u\}} \ket{y_n : Q_n}, 
  \end{align}
  which which also implies that \(x_n = y_n\) for all \(n \in L\).
  Then,
  \begin{equation}
    X_{\vb{x}(u)}^m Z_{\vb{z}(u)}^m Q_u^m \ket{G(\phi)} \simeq \ket{G(\phi)},
  \end{equation}
  and \(X_{\vb{x}(u)} Z_{\vb{z}(u)} Q_u\) stabilises the graph state for any \(m
  \in \Z_d\), up to a phase \(e^{i\alpha}\). By proposition
  \ref{prop:open_graph_stabilisers} there is some multiset \(C_{\bullet u} \in
  \Z_d^V\) such that
  \begin{equation}
    e^{i\alpha} X_{\vb{x}(u)} Z_{\vb{z}(u)} Q_u
    = \omega^{C_{\bullet u}^\mathsf{T} G C_{\bullet u}} X_{C_{\bullet u}} Z_{GC_{\bullet u}}
    \qand
    (C_{\bullet u})_v = 0 \qif v \in I.
  \end{equation}

  The corrections act only on outputs, so that the factor of \(X_{C_{\bullet
      u}} Z_{GC_{\bullet u}}^{-1}\) acting on \(u\) must be \(Q_u\).S
  This implies that \(X_{C_{uu}} Z_{(GC)_{uu}}^{-1} \simeq Q_u\), so that \(\lambda(u) =
  ((C_{\bullet u})_u,(GC_{\bullet u})_u\), and furthermore, that \((C_{\bullet
    u})_v = (GC_{\bullet u})_v = 0\) if \(v \notin O \cup \{u\}\) since
  \(X_{\vb{x}(u)}^m Z_{\vb{z}(u)}^m\) acts only on outputs.
  Furthermore, tensor products of Paulis form a basis of the space of linear
  operators, so that we must have
  \begin{equation}
    \vb{x}(v) \coloneqq (C_{\bullet v} - \lambda(v)_1 1_{\{v\}})
    \qand \vb{z}(v) \coloneqq ((GC)_{\bullet v} - \lambda(v)_2 1_{\{v\}}).
    \label{eq:converse_corrections_flow}
  \end{equation}
  
  Now, consider the open graph \((G,I,O \cup \{u\})\). 
  Since \((G,I,O,\lambda,\vb{x,z})\) is robustly deterministic, we can
  repeat the same procedure on the new open graph
  \begin{equation}
    (G,I,O \cup \{u\},\lambda|_{(O \cup \{u\})^\mathsf{c}},
    \vb{x}|_{(O \cup \{u\})^\mathsf{c}}, \vb{z}|_{(O \cup \{u\})^\mathsf{c}}),
  \end{equation}
  obtaining \(C_{\bullet v}\) for the
  last measured vertex \(v\) in \(O^\mathsf{c} \setminus \{u\}\).
  This procedure eventually terminates, and we end up with a column vector
  \(C_{\bullet w}\) for each \(w \in O^\mathsf{c}\).
  Let \(C \in \F^{V \times V}\) be the matrix whose \(u\)-th column is
  \(C_{\bullet u}\), or \(0\) if \(u \in O\).
  Then from the equations in lemma \ref{lem:penultimate_layer} we see that
  the pair \((C,<)\) gives an \Fflow for \((G,I,O,\lambda)\) by lemma
  \ref{lem:order_flow}.
  Furthermore, it is also clear from equations \eqref{eq:corrections_flow} and
  \eqref{eq:converse_corrections_flow} that \(\vb{x} = \vb{x}_C\) and \(\vb{z} =
  \vb{z}_C\).
\end{proof}

\section{A polynomial-time algorithm for \Fflow}
\label{sec:algorithm}
\begin{algorithmic}[1]
  \Statex \textbf{input:} A labelled open graph \((G,I,O,\lambda)\)
  \Statex\textbf{output:} An \Fflow \((C,\Lambda)\) or \textbf{fail}
  \vspace{5mm}
  \Procedure{F-Flow}{$G, I, O, \lambda$}
  \State find \(L \coloneqq \{u \in V \mid (\forall v \in V): G_{uv} = 0\}\)
  \Comment{Isolated vertices}
    \State \(\mathrm{layer}(0) \coloneqq O \cup L\)
    \State \(C \coloneqq 0_{|V|\times|V|}\)
    \State  \textbf{return} \Call{F-Flow-aux}{\(G, I, O \cup L,
      \lambda|_{O^\mathsf{c} \setminus L},
      C\), \textrm{layer}, 1}
  \EndProcedure
  \Statex
  \Procedure{F-Flow-aux}{\(G, I, O, \lambda, C\), \textrm{layer}, $k$}
    \State \(L \coloneqq \varnothing\) \Comment{Vertices which we are correcting in
    this layer}
    \ForAll{\(v \in O^\mathsf{c}\)}
      \State \((a,b) \coloneqq \lambda(v)\)
      \State solve in \(\F\): \(G[O^\mathsf{c},O \setminus I] \vec{c} =
      b1_{\{v\}} - aG[O^\mathsf{c},\{v\}]\)
      \If{there is a solution \(\vec{c}\)}
        \State \(L \coloneqq L \cup \{v\}\) \Comment{Assign \(v\) to the current
          layer}
        \State \(C[O \setminus I,\{v\}] \coloneqq \vec{c}\) \Comment{The
          corrections for vertex \(v\)}
        \State \(C[\{v\},\{v\}] \coloneqq a\)
      \EndIf
    \EndFor
    \If{\(L = \varnothing\)} \Comment{If we cannot correct for additional
    vertices, either:}
      \If{\(O = V\)} 
        \State \textbf{return} \((C,\mathrm{layer})\) \Comment{we have found an
          \Fflow; or,}
      \Else 
        \State \textbf{fail} \Comment{there is no \Fflow.}
      \EndIf
    \Else
      \State \(\mathrm{layer}(k) \coloneqq L\) 
      \State  \textbf{return} \Call{F-Flow-aux}{\(G, I, O \cup
        L,\lambda|_{O^\mathsf{c} \setminus L}, C\),
        \textrm{layer}, $k+1$}
    \EndIf
  \EndProcedure
\end{algorithmic}

\subsection{Correctness}

This algorithm is strongly inspired by the analogous algorithm for finding
gflows for \(\Z_2\)-graphs \cite{mhalla_finding_2008}, and its generalisations
to multiple measurement planes \cite{backens_there_2021} and to finding CV-flows
\cite{booth_flow_2021}.

\begin{theorem}
  \((G,I,O,\lambda)\) has an \Fflow if and only if the algorithm above
  returns a valid \Fflow.
  \label{thm:algo_correctness}
\end{theorem}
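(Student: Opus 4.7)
The plan is to prove the equivalence in two directions. The forward direction (soundness) verifies that if the algorithm returns $(C, \mathrm{layer})$ then this is a valid \Fflow, by direct inspection of the three defining conditions. The reverse direction (completeness) proceeds by induction on $|V \setminus O|$, showing that whenever an \Fflow exists, the algorithm does not fail.

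For soundness, condition (i) holds because whenever a vertex $v$ with $\lambda(v) = (a, b)$ is assigned to the current layer, the algorithm sets $C_{vv} = a$ and chooses $\vec c$ so that expanding $(GC)_{vv} = \sum_w G_{vw} C_{wv}$ and substituting the algorithm's linear equation yields $(GC)_{vv} = b$. Condition (ii) follows from inspection: $C[I, V] = 0$ because $\vec c$ is only written into rows indexed by $O \setminus I$, and $C[V, O_{\text{orig}}] = 0$ because columns are only assigned for $v \in O^{\mathsf{c}}$. Condition (iii) is the key soundness property: when $v$ is processed at recursion depth $k$, the current output set is $O_{\text{orig}} \cup \mathrm{layer}(1) \cup \cdots \cup \mathrm{layer}(k-1)$, so $C_{\bullet v}$ has nonzero entries only in these earlier layers plus the diagonal at $v$. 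Ordering $\Lambda$ so that higher algorithm index corresponds to earlier position in the layer order, the triangularity condition $C[M, N] = 0$ for $M < N$ follows automatically, and the same calculation as for (i) shows $(GC)_{uv} = b \cdot 1_{\{v\}}(u)$ for all $u \in O^{\mathsf{c}}$ at the time of processing, giving the analogous triangularity for $GC$.

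For completeness, I establish two claims in the inductive step (the base case $O=V$ is trivial). First, the algorithm's first iteration produces a non-empty $L$: given any \Fflow $(C', \Lambda')$ of the current labelled open graph, let $M$ be its last layer in the measurement ordering. For $v \in M$, conditions (ii)--(iii) force $C'[O^{\mathsf{c}} \setminus \{v\}, \{v\}] = 0$ and $(GC')[O^{\mathsf{c}}, \{v\}] = \lambda(v)_2 \cdot 1_{\{v\}}$, and a routine calculation confirms that $\vec c := C'[O \setminus I, \{v\}]$ solves the algorithm's linear system. Hence $M \subseteq L$. Second, the restricted labelled open graph $(G, I, O \cup L, \lambda|_{(O \cup L)^{\mathsf{c}}})$ still admits an \Fflow: define $C^*$ by zeroing out all columns of $C'$ indexed by vertices in $L$, and let $\Lambda^*$ be obtained from $\Lambda'$ by restricting each layer to $V \setminus (O \cup L)$ and inserting $O \cup L$ as the new final output layer. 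Since zeroing columns of $C^*$ also zeroes the corresponding columns of $GC^*$ and leaves the other columns unchanged, conditions (i)--(iii) transfer directly from $(C', \Lambda')$ to $(C^*, \Lambda^*)$. By the induction hypothesis the algorithm succeeds on the restricted graph, and the produced \Fflow combined with $L$ is a valid \Fflow for the original input.

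I expect the main obstacle to lie in formalising the second claim, because the greedy $L$ may strictly contain the last-measured layer $M$ of the particular \Fflow $(C', \Lambda')$ one starts from, so one cannot simply peel off layers of $\Lambda'$. The observation that resolves this is that conditions (i)--(iii) constrain only columns indexed by non-output vertices, so absorbing the greedy $L$ into the output set makes its (now unconstrained) columns irrelevant while preserving the structure on the remaining columns of $C'$. One must still verify carefully that the collapsed layer $O \cup L$ can be inserted at the top of $\Lambda^*$ without violating triangularity of $GC^*$, which follows because the zero columns of $C^*$ yield zero columns of $GC^*$.
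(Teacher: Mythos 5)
Your proof is correct and takes essentially the same approach as the paper's: the soundness check is identical, and your completeness induction rests on the same two ingredients (zeroing the columns of an existing \Fflow over the enlarged output set to obtain an \Fflow of the restricted graph, and showing that the column of a last-measured non-output vertex solves the algorithm's linear system) that the paper packages instead as a proof by contradiction. The worry you flag about the greedy layer $L$ strictly containing the last-measured layer is resolved exactly as you say, and matches the paper's construction of $\overline{D}$.
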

\begin{proof}
  It is clear the algorithm terminates, since at each call to
  \textsc{Z-Flow-Aux}, the algorithm either passes vertices from \(V \setminus O\)
  to \(O\), returns an \Fflow, or fails. Since \(V\) is finite, there are a
  finite number of recursions after which the algorithm either returns an \Fflow
  or fails.

  \textit{(Outputs a valid \Fflow)} Suppose the algorithm terminates with a pair
  $(H,C)$. We need to show this defines a valid $\mathbb F_q$-flow. Consider the
  function \textsc{Z-Flow-Aux} at a given call, and let \(H' \coloneqq
  G[O^\mathsf{c},O \setminus I]\) and \(h \coloneqq G[O^\mathsf{c},{v}]\).

  The output columns of $C$ are $0$, and since the solution vector $x$ from line
  11 never contains an input, the input rows of $C$ are also $0$. Hence
  condition (i) is satisfied.

  Similarly, the solution vector $x$ only has rows labelled by vertices $v \geq
  u$, so $C$ is lower triangular by construction. If the linear equation in line
  11 is satisfied, then the entries above $(HC)_{uu}$ in the $u$-th column of
  $HC$ will be $0$. Hence (ii) is satisfied. Indeed, for any $u>v$, $(HC)_{vu} =
  \sum_wH_{vw}C_{wu} = \sum_{w<u} H_{vw}C_{wu} + H_{vu}C_{uu} + \sum_{w>u}
  H_{vw}C_{wu} = h_wa + \sum_{w>u} H'_{vw}x_w$. Since $C_{uu}=a$, $H_{vu} =
  h_v$, $\forall w<u, C_{wu} =0$, and $\forall w>u, H_{vw} = H'_{vw}$ and
  $C_{wu} = x_w$. As a consequence $(HC)_{vu} = (ah+H'x)_v = (ah + b\{u\}
  -ah)_v$

  Finally, for a non-output $u$, $C_{uu} = a$. As a consequence $C$ is an
  $\mathbb F_q$-flow for
  $H$.

  \textit{(Outputs a valid layer decomposition)} Let \(L\) be as in line 21 of
  the algorithm for some call \(k\) to \textsc{Z-Flow-Aux}. It is clear that the
  equation line 10 doesn't depend on the vertices in \(L\) which appear before
  or after \(\{v\}\) and therefore \(L\) is independent of the order in which
  the elements of \(L\) are found. As a result, the output of the algorithm is
  invariant any permutation of the vertices in \(L\). Since this corresponds
  tautologically to a permutation of the layer \(V_k\) output by the algorithm,
  and every permutation that preserves the partition can be written as a product
  of such permutations, the \(\Z_d\)-flow found by the graph is invariant
  under permutations that preserve the layers (whenever the algorithm succeeds).

  \textit{(Outputs a \Fflow whenever there is one)} Assume the algorithm fails,
  that is, for some call to \textsc{Z-Flow-Aux}, line 10 has no solution for any
  remaining unfinished vertices. Let \(\overline{O}\) be the third parameter at
  that function call, and further assume that \(D\) is an \Fflow for
  \((G,I,O,\lambda)\).

  Let \(\overline{D}\) be the matrix obtained by replacing the columns in \(D\)
  corresponding to \(\overline{O}\) with zeros and permuted such that the
  columns \(\overline{O}\) appear last. Then, \(\overline{D}\) is an \Fflow for
  \((G,I,\overline{O},\lambda|_{\overline{O}^\mathsf{c}})\). Let \(v \in
  \overline{O}^\mathsf{c}\) be the last column before
  \(\overline{O}^\mathsf{c}\), and put \(c \coloneqq \overline{D}[\overline{O}
  \setminus I, \{v\}]\). Then,
  \begin{align}
    (G[\overline{O}^\mathsf{c}, \overline{O} \setminus I] c)_u
    &= \sum_{j \in \overline{O} \setminus I} G_{uj}c_j
      = \sum_{j \in \overline{O} \setminus I} G_{uj} \overline{D}_{jv}
      = \sum_{j \in \overline{O}} G_{uj} \overline{D}_{jv} = \sum_{j \in \overline{O}} G_{uj} D_{jv} \\
    &= \sum_{j \in V} G_{uj} D_{jv} - \sum_{j \in \overline{O}^\mathsf{c}} G_{uj} D_{jv}
      = (GD)_{uv} - \sum_{j \in \overline{O}^\mathsf{c}} G_{uj} D_{jv}\\
    &= b \delta_{u,v} - G_{uv}D_{vv} = b \delta_{u,v} - aG_{uv}.
  \end{align}
  As a result, we see that \(c\) verifies the equation of line 10, which
  contradicts the failure of the algorithm. It follows that \((G,I,O,\lambda)\)
  cannot have an \Fflow if the algorithm fails. By contrapositive, if
  \((G,I,O,\lambda)\) has an \Fflow, the algorithm succeeds.
\end{proof}

The core of the algorithm is the loop line 8. Letting \(n = \abs{V}\), \(\ell =
\abs{O}\) and \(\ell' = \abs{O \setminus I}\) at a given call to
\textsc{Z-Flow-Aux}, note that \(\ell' \leqslant \ell \leqslant n\). The loop
amounts to solving \(n-\ell\) systems of \(n-\ell\) equations in \(\ell'\)
variables. Let \(x_v\) be the right hand side of equation line 10. Solving the
system can be done by transforming the matrix
\begin{equation}
  \big[ G[O^\mathsf{c},O \setminus I] \mid x_{v_1} \mid \cdots \mid x_{v_{n-\ell}} \big]
\end{equation}
to upper echelon form. This can by done in time \(O(n^3)\) by Gaussian
elimination, and backsubstituting to find the corresponding \(\vec{c}_j\) to
each \(x_j\) takes time \(O(n^2)\) or for all solutions \(O(n^3)\) since there
are at most \(n\) backsubstitutions to perform. Finally, since each call to
\textsc{Z-Flow-Aux} either eliminates a vertex or terminates, the algorithm
recurses at most \(n\) times. The total complexity is therefore \(O(n^4)\).

Note this procedure can also be adapted to find an \Fflow for \textit{any}
labelling, rather than one fixed in advance. First, note that, for the existence
of an \Fflow, it suffices to choose measurement planes up to a scalar factor.
That is, \((C,\Lambda)\) is an \Fflow for \((G,I,O,\lambda)\) with $\lambda(u) =
(a,b)$ if and only if it is an \Fflow for \((G,I,O,\lambda')\) where
$\lambda'(u) = (ka,kb)$. Hence we can solve for measurement planes at the same
time as $C$ by either fixing $a = 1$ and solving for $b$ in the equation line 10
of the algorithm, or for non-inputs, fixing $b = 1$ and solving for
$a$.

\subsection{Depth optimality}

Our proofs follow the structure of \cite{mhalla_finding_2008}, which introduced
the idea of optimising gflows starting from the last layer and working back.
The idea is to find corrections for as many measured vertices as possible at
the part of the MBQC when there are the most constraints on possible
corrections: when the only vertices left unmeasured are the outputs.
This motivates the following definition which allows us to conveniently
manipulate layer decompositions ``from the back'':
\begin{definition}
  Let \((C,\Lambda)\) be an \Fflow for an open graph \((G,I,O,\lambda)\).
  Then, the \emph{depth} of \((C,\Lambda)\) is \(\abs{\Lambda}-1\).
  Furthermore, we define an \(\N\)-indexing of the elements of \(\Lambda\) by:
  \begin{equation}
    \Lambda_k \coloneqq \max( \Lambda \setminus \{\Lambda_n \mid n < k\}),
  \end{equation}
  where we note that \(\Lambda_m < \Lambda_n\) as elements of \(\Lambda\) if and
  only if \(n > m\), and \(\Lambda_k \neq \emptyset\) if and only if \(k\) is less
  than or equal to the depth of \((C,\Lambda)\).
\end{definition}

This definition of the depth of an \Fflow corresponds to the intuitive
interpretation: all measurements (and corresponding corrections) within a layer
can be made concurrently, therefore there is an implementation that runs the
MBQC in \(\abs{\Lambda}-1\) rounds of measurements (since the outputs are
not measured).

Now, we can use this definition to compare the depths of different \Fflows:
\begin{definition}
  Let \((C,\Lambda)\) and \((D,\Phi)\) be \Fflows for an open graph
  \((G,I,O,\lambda)\). Then \((C,\Lambda)\) is \emph{more delayed} than
  \((D,\Phi)\) if for each \(k\),
  \begin{equation}
    | \bigcup_{n=0}^k \Lambda_n | \geqslant | \bigcup_{n=0}^k \Phi_n |.
  \end{equation}
  and this inequality is strict for at least one \(k\). It is \emph{maximally
    delayed} if there is no layer decomposition which is more delayed.
\end{definition}

Then, we can give a complete characterisation of the layer decompositions of
maximally delayed \Fflows, which turn out to be uniquely defined:
\begin{proposition}
  If \((C,\Lambda)\) is a maximally delayed \Fflow for an open graph
  \((G,I,O,\lambda)\), then \(\Lambda_0 = O \cup \{u \in V \mid (\forall v \in V):
  G_{uv} = 0\}\) and for \(k > 0\),
  \begin{equation}
    \Lambda_k = \Bigg\{ u \in (O \bigcup_{1<n<k} \Lambda_n)^\mathsf{c} \mid \exists c \in \F^V \quad \text{s.t.} \quad
    \begin{aligned}
      &(c_u, (Gc)_u) = \lambda(u) \\
      &\forall v \notin (O \bigcup_{1<n<k} \Lambda_n) \cup \{u\}, c_v = (Gc)_v = 0
    \end{aligned}
    \Bigg\}.
  \end{equation}
  In particular, if \((C,\Lambda)\) and \((D,\Phi)\) are maximally delayed
  \Fflows for the same open graph, then \(\Lambda = \Phi\).
  \label{prop:maximal_delay_unique}
\end{proposition}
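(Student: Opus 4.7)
The plan is to prove the proposition by induction on $k$, showing that each layer $\Lambda_k$ is forced to equal the set $S_k$ on the right-hand side of the stated formula; the uniqueness claim then follows immediately, since $S_k$ depends only on the earlier layers (determined inductively) and on the fixed data $(G,I,O,\lambda)$.

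For the base case $k=0$, I would establish both inclusions separately. Outputs have zero columns by condition (ii) of the \Fflow definition, so they can be freely reassigned to the top layer without breaking any condition, and maximality of delay forces $O \subseteq \Lambda_0$. An isolated non-output $u$ has $G_{vu}=0$ for all $v$, so condition (i) forces $\lambda(u)=(a,0)$ for some $a \in \Z_d^*$, and the column $c = a \cdot 1_{\{u\}}$ is a valid correction column independent of layer placement, giving $L \subseteq \Lambda_0$ for $L$ the set of isolated non-outputs. For the reverse inclusion, any non-output $u \in \Lambda_0$ has $C_{\bullet u}$ supported on $\{u\}$ by the diagonality of $C[\Lambda_0,\Lambda_0]$ and the fact that $\Lambda_0$ is the top layer; this forces $\lambda(u)=(c_u,0)$ and $(Gc)_v = G_{vu}c_u$, and the constraints $(GC)[M,\Lambda_0]=0$ for $M<\Lambda_0$ together with the diagonality of $(GC)[\Lambda_0,\Lambda_0]$ force $G_{vu}c_u = 0$ for all $v \neq u$; since $c_u \neq 0$ (otherwise $\lambda(u)=(0,0)$, contradicting the labelling), $u$ must be isolated.

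For the inductive step, fix $k>0$ and assume the characterization holds for $n<k$. The forward inclusion $\Lambda_k \subseteq S_k$ follows by taking $c=C_{\bullet u}$ as witness: condition (i) gives $(c_u,(Gc)_u)=\lambda(u)$, and condition (iii) together with the diagonality of $C[\Lambda_k,\Lambda_k]$ and $(GC)[\Lambda_k,\Lambda_k]$ forces $c$ and $Gc$ to vanish outside $\{u\} \cup \bigcup_{n<k}\Lambda_n$, which by the base case equals $\{u\} \cup O \cup \bigcup_{1 \leq n<k}\Lambda_n$. For the reverse inclusion $S_k \subseteq \Lambda_k$, suppose $u \in S_k$ with witness $c$ but $u \in \Lambda_j$ for some $j>k$. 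I would construct a modified \Fflow $(C',\Lambda')$ by setting $\Lambda'_k := \Lambda_k \cup \{u\}$, $\Lambda'_j := \Lambda_j \setminus \{u\}$, replacing $C_{\bullet u}$ with $c$, and leaving everything else unchanged. Conditions (i) and (iii) are then verified column by column: the support restrictions on $c$ guarantee that the new column contributes no forbidden entries in $C'$ or $GC'$, and the unchanged columns inherit (iii) from the original flow. Since $|\bigcup_{n\leq t}\Lambda'_n|$ strictly exceeds $|\bigcup_{n\leq t}\Lambda_n|$ for every $t$ with $k \leq t < j$ (and is equal elsewhere), the modified flow is strictly more delayed, contradicting the maximality of $(C,\Lambda)$.

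The main technical obstacle will be verifying condition (ii) for the modified flow, since the characterization does not explicitly require $c_v = 0$ for inputs $v$; if an input lies in some $\Lambda_n$ with $n<k$, the witness $c$ is a priori allowed to have nonzero entries there. I expect the resolution is to show that the defining linear system can always be solved with the extra constraint of vanishing input rows, by restricting to non-input variables exactly as in the algorithm of Section~\ref{sec:algorithm}; this should go through because the support and value constraints on $c$ decouple from the input rows, but the argument requires careful bookkeeping. Granted this, the induction terminates after $|\Lambda|$ steps, yielding $\Lambda_k = S_k$ for every $k$ and hence both the explicit characterization and the uniqueness of the maximally delayed layer decomposition.
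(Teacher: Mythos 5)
Your proof follows essentially the same route as the paper's: the paper packages your induction as three lemmas (a characterisation of \(\Lambda_0\) as outputs plus isolated vertices, a characterisation of \(\Lambda_1\) via the same ``swap a qualifying vertex into the penultimate layer to obtain a strictly more delayed flow'' argument you use, and a recursive step that moves \(\Lambda_1\) into the outputs), but the content --- using maximality to force each layer to equal the full solution set of the relevant linear system, and deducing uniqueness because that set depends only on the previously determined layers --- is identical. The one obstacle you flag, namely that the witness \(c\) in the stated characterisation is not required to vanish on input rows and so the modified column could violate condition (ii) of the \Fflow definition, is a genuine subtlety, but the paper's own statement and proof of lemma \ref{lem:penultimate_layer} carry exactly the same omission (the replacement matrix \(E\) is asserted to be an \Fflow without checking the input condition), so on this point you are being more careful than the source rather than missing something it supplies.
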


To make this proof we need the following three lemmas, which have somewhat
cumbersome proofs that we have chosen to leave to appendix \ref{app:maximal_delay_lemmas}:
\begin{restatable}{lemma}{LemFinalLayer}
  If \((C,\Lambda)\) is a maximally delayed \Fflow for an open graph
  \((G,I,O,\lambda)\), then \(\Lambda_0 = O \cup \{u \in V \mid (\forall v \in V):
  G_{uv} = 0\}\), i.e. the union of the outputs and isolated vertices of
  \((G,I,O,\lambda)\).
  \label{lem:final_layer}
\end{restatable}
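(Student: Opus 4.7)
I will prove the set equality by two inclusions. For $\supseteq$, I show that in a maximally delayed flow every output and every isolated non-output must lie in the top layer; the contrapositive is established by constructing a strictly more delayed $\Z_d$-flow whenever such a vertex sits below $\Lambda_0$. For $\subseteq$, a direct analysis of conditions (i) and (iii) applied to the last layer forces any vertex of $\Lambda_0 \setminus O$ to be isolated.

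\textbf{Moving outputs and isolated vertices into $\Lambda_0$.} For an output $u \in \Lambda_k$ with $k > 0$, condition (ii) makes the $u$-th column of $C$ vanish identically, and condition (iii) applied to the inequality $\Lambda_k < \Lambda_0$ forces the $u$-th row of $C$ restricted to $\Lambda_0$ to vanish as well, with the analogous statements for $GC$. Relocating $u$ to $\Lambda_0$ therefore preserves all three flow conditions, and the resulting decomposition $\Lambda'$ satisfies $|\bigcup_{n=0}^{j} \Lambda'_n| > |\bigcup_{n=0}^{j} \Lambda_n|$ for $0 \leq j < k$, contradicting maximal delay. For an isolated non-output $u$, note first that $(GC)_{uu} = \sum_v G_{uv} C_{vu} = 0$ regardless of $C$, so condition (i) forces $\lambda(u) = (a, 0)$ with $a = C_{uu} \in \F^*$. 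Replacing the $u$-th column of $C$ by $a \cdot 1_{\{u\}}$ yields a new matrix $C'$ that agrees with $C$ outside the $u$-th column; by isolation of $u$, the $u$-th column of $GC'$ also vanishes, so conditions (i)--(iii) survive the substitution. The same relocation argument then applies, once again contradicting maximality.

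\textbf{The reverse inclusion.} Let $u \in \Lambda_0 \setminus O$. Diagonality of $C[\Lambda_0, \Lambda_0]$ combined with $C[M, \Lambda_0] = 0$ for every $M < \Lambda_0$ forces $C_{vu} = 0$ for all $v \neq u$; the analogous statement for $GC$ gives $(GC)_{vu} = G_{vu} C_{uu} = 0$ for every $v \neq u$. Condition (i) then reads $\lambda(u) = (C_{uu}, G_{uu} C_{uu}) = (C_{uu}, 0)$ since $G$ is loop-free, and since $\lambda(u) \neq (0,0)$ necessarily $C_{uu} \neq 0$, whence $G_{vu} = 0$ for every $v \neq u$, so $u$ is isolated. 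The only delicate step in the whole argument is the careful verification that the two modifications made in the previous paragraph genuinely preserve conditions (i)--(iii) after $u$ is relocated; everything else reduces to direct substitution into the definitions.
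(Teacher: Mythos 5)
Your proof is correct and takes essentially the same approach as the paper: the inclusion \(O \cup \{\text{isolated vertices}\} \subseteq \Lambda_0\) is obtained by replacing the column of an isolated vertex \(u\) by \(C_{uu}1_{\{u\}}\) and relocating the offending vertex to the top layer to produce a strictly more delayed flow, and the reverse inclusion follows by reading off from conditions (i) and (iii) that any \(u \in \Lambda_0 \setminus O\) has column \(C_{uu}1_{\{u\}}\) with \(C_{uu} \neq 0\), forcing \(G_{vu} = 0\) for all \(v\). The only differences are presentational: you relocate one vertex at a time and split outputs from isolated vertices, where the paper moves the whole set at once, and your justification that \(C_{uu} \neq 0\) in the reverse inclusion is slightly more direct than the paper's case analysis.
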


\begin{restatable}{lemma}{LemPenultimateLayer}
  If \((C,\Lambda)\) is maximally delayed for \((G,I,O,\lambda)\), then
  \begin{equation}
    \Lambda_1 = \Bigg\{ u \in O^\mathsf{c} \,\mid\, \exists c \in \Z_d^{|V|} \quad \text{s.t.} \quad
    \begin{aligned}
      &(c_u, (Gc)_u) = \lambda(u) \\
      &\forall v \notin O \cup \{u\}, c_v = (Gc)_v = 0
    \end{aligned}
    \Bigg\}.
    \label{eq:penultimate_layer}
  \end{equation}
  \label{lem:penultimate_layer}
\end{restatable}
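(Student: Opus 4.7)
My plan is to prove both set inclusions, relying on Lemma \ref{lem:final_layer} to identify $\Lambda_0 = O \cup \{\text{isolated vertices}\}$. For the forward inclusion, given $u \in \Lambda_1$, I would take $c \coloneqq C_{\bullet u}$, the $u$-th column of the F-flow matrix. Condition (i) of the F-flow gives $(c_u, (Gc)_u) = \lambda(u)$ immediately. The lower-triangularity of $C$ along $\Lambda$ (condition (iii)) together with $u \in \Lambda_1$ forces $c_v = 0$ for every $v$ in a layer strictly earlier than $\Lambda_1$ and for every $v \in \Lambda_1 \setminus \{u\}$ (the latter from the diagonal-within-layer condition), and similarly for $Gc$. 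The only gap with the statement of the lemma is potential support on isolated non-output vertices $w \in \Lambda_0 \setminus O$; but since $G_{w\bullet}$ vanishes identically, $(Gc)_w = 0$ automatically, and resetting $c_w$ to $0$ does not alter $Gc$. The adjusted vector then has the required support in $O \cup \{u\}$.

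For the converse direction, I would argue by contradiction. Suppose $u \in O^\mathsf{c}$ admits a witness $c$ as in the right-hand side, but $u \in \Lambda_k$ for some $k \geq 2$. I would construct $(C', \Lambda')$ by replacing the $u$-th column of $C$ with $c$ and moving $u$ from $\Lambda_k$ into $\Lambda_1$, then show this is still an F-flow, strictly more delayed than $(C, \Lambda)$. Condition (i) at $u$ follows from the defining property of $c$. Condition (ii) is satisfied because the support condition on $c$ kills rows indexed by $I \setminus O$ in the new column, while if $u \in I$ the original F-flow already forces $\lambda(u)_1 = C_{uu} = 0$, so $c_u = 0$. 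Condition (iii) is preserved because (a) the new column vanishes on $\Lambda_1 \setminus \{u\}$ (since $\Lambda_1 \cap O = \emptyset$), preserving diagonality inside the new penultimate layer $\Lambda_1 \cup \{u\}$; and (b) moving $u$ to a strictly later layer only relaxes the triangularity constraints on the $u$-th row, while the untouched columns continue to satisfy their original constraints. Since $u \notin \Lambda_0 \cup \Lambda_1$ originally, $|\Lambda_0' \cup \Lambda_1'| = |\Lambda_0 \cup \Lambda_1| + 1$, contradicting the maximality of $(C, \Lambda)$.

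The main obstacle, I expect, will be the bookkeeping for condition (iii) after the column replacement and layer reassignment, specifically verifying the triangularity of both $C'$ and $GC'$ across every pair of layers in $\Lambda'$. The construction hinges on a delicate alignment: the support condition on $c$ kills exactly the entries needed to maintain diagonality within the new penultimate layer, while the direction of $u$'s layer move (later, not earlier) is what ensures no new violations arise on the $u$-th row of any untouched column.
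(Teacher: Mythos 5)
Your proposal follows the same two-step strategy as the paper's own proof: the forward inclusion reads off the \(u\)-th column of the flow matrix and uses the triangularity/diagonality conditions to restrict its support to \(O \cup \{u\}\), and the converse builds a strictly more delayed \Fflow by substituting the witness column for \(C_{\bullet u}\) and promoting \(u\) into the penultimate layer, contradicting maximality. If anything your version is more careful than the paper's (which silently identifies \(\Lambda_0\) with \(O\), ignoring isolated non-output vertices, and never checks condition (ii) of the \Fflow definition), so the approach is sound and essentially identical.
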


\begin{restatable}{lemma}{LemRecursiveDecomposition}
  If \((C,\Lambda)\) is a maximally delayed \Fflow of \((G,I,O,\lambda)\), \((D,
  \Phi)\) is a maximally delayed \Fflow of \((G,I,O \cup \Lambda_1,\lambda|_{(O
    \cup \Lambda_1)^\mathsf{c}})\), where
  \begin{itemize}
  \item \(D\) is the matrix obtained by replacing the columns of \(C\)
    corresponding to \(\Lambda_1\) with zeros;
  \item \(\Phi\) is given by
    \begin{equation}
      \Phi_k \coloneqq
      \begin{cases}
        \Lambda_1 \cup O \quad \text{if} \quad k = 0; \\
        \Lambda_{k+1} \quad \text{otherwise}.
      \end{cases}
    \end{equation}
  \end{itemize}
  \label{lem:recursive_decomposition}
\end{restatable}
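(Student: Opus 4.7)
My plan has two parts. First, verify that $(D, \Phi)$ is an \Fflow for the modified open graph; second, establish maximal delay by contradiction. Part one is routine bookkeeping: since $D$ differs from $C$ only by zeroing the columns indexed by $\Lambda_1$, condition (i) on any $u \in (O \cup \Lambda_1)^\mathsf{c}$ follows immediately from $D_{uu} = C_{uu}$ and $(GD)_{uu} = (GC)_{uu}$, and condition (ii) is immediate. For condition (iii), the top layer $\Phi_0 = O \cup \Lambda_1$ (enlarged by any remaining isolated vertices) has all-zero columns in $D$, so the diagonal conditions there are trivial; for $k \geq 1$, $\Phi_k = \Lambda_{k+1}$, and the relevant submatrices of $D$ and $GD$ agree with those of $C$ and $GC$, hence inherit the required structure.

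For the maximal delay claim I will argue by contradiction: suppose $(D', \Phi')$ is an \Fflow for the modified graph that is strictly more delayed than $(D, \Phi)$, and without loss of generality take $\Phi'$ to be itself maximally delayed (by iteratively refining), so that $\Phi'_0 \supseteq O \cup \Lambda_1$ and, by Lemma \ref{lem:final_layer}, $\Phi'_0 \supseteq \Lambda_0$. From $(D', \Phi')$ I will build a candidate \Fflow $(C', \Lambda')$ for the original graph $(G, I, O, \lambda)$ by defining $\Lambda'_0 = \Lambda_0$, $\Lambda'_1 = \Phi'_0 \setminus \Lambda_0$, and $\Lambda'_k = \Phi'_{k-1}$ for $k \geq 2$, and assembling $C'$ column by column: the $v$-th column is $0$ for $v \in O$, equals $C[\cdot, v]$ for $v \in \Lambda_1$, and equals $D'[\cdot, v]$ for $v \in (O \cup \Lambda_1)^\mathsf{c}$.

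The main obstacle will be verifying condition (iii) for $(C', \Lambda')$ on the new penultimate layer $\Lambda'_1$, specifically the diagonal conditions on $C'[\Lambda'_1, \Lambda'_1]$ and $(GC')[\Lambda'_1, \Lambda'_1]$, as well as the vanishing of $C'[\Lambda'_k, \Lambda'_1]$ and $(GC')[\Lambda'_k, \Lambda'_1]$ for $k \geq 2$. The essential observation is that for $v' \in \Phi'_0 \setminus (O \cup \Lambda_1)$, condition (i) for $(D', \Phi')$ forces $\lambda(v')_2 = (GD')_{v'v'} = 0$ and, combined with $(GD')[\Phi'_0, \Phi'_0]$ being diagonal, yields $G_{v' w} = 0$ for every $w \in \Phi'_0 \setminus \{v'\}$. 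This graph-theoretic isolation of $v'$ within $\Phi'_0$ is what makes the mixed cross-terms between columns indexed by $\Lambda_1$ and those indexed by $\Lambda'_1 \setminus \Lambda_1$ vanish in $GC'$. The remaining layer conditions follow directly from the flow conditions on $(C, \Lambda)$ and $(D', \Phi')$, using that $\Phi'_{k-1}$ is disjoint from $\Phi'_0 \supseteq \Lambda_0 \cup \Lambda_1$ for $k \geq 2$.

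To close the argument I will compare $|\bigcup_{n=0}^k \Lambda'_n|$ with $|\bigcup_{n=0}^k \Lambda_n|$: equality at $k=0$; at $k=1$, $|\Lambda'_0 \cup \Lambda'_1| = |\Phi'_0|$ using $\Lambda_0 \subseteq \Phi'_0$, which is $\geq |\Phi_0| = |\Lambda_0 \cup \Lambda_1|$; and for $k \geq 2$, $|\bigcup_{n \leq k} \Lambda'_n| = |\bigcup_{n \leq k-1} \Phi'_n| \geq |\bigcup_{n \leq k-1} \Phi_n| = |\bigcup_{n \leq k} \Lambda_n|$. Any strict inequality of $\Phi'$ over $\Phi$ transfers to a strict inequality of $\Lambda'$ over $\Lambda$, so $(C', \Lambda')$ is strictly more delayed than $(C, \Lambda)$, contradicting the maximal delay of the latter and completing the proof.
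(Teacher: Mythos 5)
Your proposal is correct and follows essentially the same strategy as the paper's own (extremely terse, two-sentence) proof: verify the flow conditions for $(D,\Phi)$ directly, then derive maximal delay by lifting any hypothetically more delayed flow of the restricted graph back to a flow of $(G,I,O,\lambda)$ that would be more delayed than $(C,\Lambda)$. You simply supply the explicit construction and bookkeeping that the paper leaves implicit.
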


\begin{proof}[Proof of proposition \ref{prop:maximal_delay_unique}]
  \(\Lambda_0\) must take the form given in lemma \ref{lem:final_layer}.
  Then, a recursive application of lemma \ref{lem:recursive_decomposition} and
  lemma \ref{lem:penultimate_layer} shows that the layer decomposition of a
  maximally delayed \Fflow is uniquely defined.

  Since the open graph obtained from lemma \ref{lem:recursive_decomposition} and
  used to calculate \(\Lambda_k\) with lemma \ref{lem:penultimate_layer} is
  \((G,I,O \bigcup_{1<n<k} \Lambda_n, \lambda|_{(O \bigcup_{1<n<k}
    \Lambda_n)^\mathsf{c}})\), it is clear that \(\Lambda_k\) must take the form
  claimed.
\end{proof}

This can be understood from the following principle. If a correction exists for
the measurement of a vertex that acts only on outputs, then this correction can
be performed at any point during the MBQC, since the outputs are never measured
and therefore always available for corrections. As a result, we can delay this
measurement as much as possible, to the penultimate layer, to give ourselves as
much flexibility as possible in corrections for previous layers. As a result, we
can put all vertices whose corrections act only on outputs in the final layer.
Any vertices which do not verify this property must be in a layer which precedes
the penultimate layer, since at the time they are measured there must be
non-output vertices which are left unmeasured.
Then, it suffices to show that there is a minimal depth \Fflow that is maximally
delayed to obtain:
\begin{proposition}
  A maximally delayed \Fflow for an open graph \((G,I,O,\lambda)\) has minimal
  depth.
  \label{prop:maximal_optimal_depth}
\end{proposition}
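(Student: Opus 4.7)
The overall plan is to leverage the uniqueness of the layer decomposition of maximally delayed \Fflows given by Proposition~\ref{prop:maximal_delay_unique} and prove the containment
\[
  \bigcup_{n=0}^{k} \Phi_n \;\subseteq\; \bigcup_{n=0}^{k} \Lambda_n
  \qquad \text{for every } k \geq 0
\]
for any other \Fflow \((D,\Phi)\) of \lgraph, where \((C,\Lambda)\) is the canonical maximally delayed one. Taking $k$ equal to the depth of \((D,\Phi)\), the left-hand side equals $V$, forcing the right-hand side to equal $V$ as well, so the depth of \((C,\Lambda)\) is at most that of \((D,\Phi)\). This is exactly the claim.

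The base case $k=0$ reduces to showing \(\Phi_0 \subseteq \Lambda_0\). For \(u \in \Phi_0 \cap \nout\), condition (iii) applied to the diagonal \(\Phi_0 \times \Phi_0\) block and to each pair \(M < \Phi_0\) forces \(D_{\bullet u} = \lambda(u)_1 \, 1_{\{u\}}\). Loop-freeness of $G$ then yields \(\lambda(u)_2 = (GD)_{uu} = \lambda(u)_1 G_{uu} = 0\); since \(\lambda(u) \neq (0,0)\) this forces \(\lambda(u)_1 \neq 0\), and the remaining entries of \((GD)_{\bullet u}\) must vanish outside row $u$, giving \(G_{vu} = 0\) for every \(v \neq u\), so $u$ is isolated. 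Combined with \(\Phi_0 \cap O \subseteq O\), this yields \(\Phi_0 \subseteq O \cup \{\text{isolated vertices}\} = \Lambda_0\) by Lemma~\ref{lem:final_layer}.

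For the inductive step, suppose the containment holds through layer $k$ and let \(u \in \Phi_{k+1}\). If \(u\) is already in \(\bigcup_{n \leq k} \Lambda_n\) we are done, so assume otherwise and aim for \(u \in \Lambda_{k+1}\) via the explicit characterisation in Proposition~\ref{prop:maximal_delay_unique}. The natural witness is \(c := D_{\bullet u}\): condition (i) on $D$ gives \((c_u, (Gc)_u) = \lambda(u)\), while condition (iii) applied to the pairs \((\Phi_m, \Phi_{k+1})\) with \(m > k+1\) and to the diagonal \(\Phi_{k+1}\)-block makes \(c_v = (Gc)_v = 0\) for every \(v \in (\bigcup_{n \geq k+1} \Phi_n) \setminus \{u\}\). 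Hence the supports of $c$ and $Gc$ lie in \(\bigcup_{n \leq k} \Phi_n \cup \{u\}\), which by the induction hypothesis is contained in \(\bigcup_{n \leq k} \Lambda_n \cup \{u\}\). To match the precise support condition of Proposition~\ref{prop:maximal_delay_unique} I would then zero out the coordinates of $c$ corresponding to isolated non-output vertices; this leaves \(c_u\) intact (since $u$ is not isolated, otherwise $u \in \Lambda_0$, contradicting our standing assumption) and does not affect $Gc$ anywhere (rows and columns of $G$ at isolated vertices vanish identically). The modified column is a valid witness, so \(u \in \Lambda_{k+1}\).

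The main subtlety, and the only place where the argument is not simply a formal manipulation of partitions, is this cleanup at the isolated-vertex level: the induction naturally gives containment up to the slack in \(\Lambda_0\) (which equals \(O\) together with isolated vertices), whereas the characterisation of \(\Lambda_{k+1}\) in Proposition~\ref{prop:maximal_delay_unique} demands the witness to be supported away from those isolated non-outputs. The zeroing step above is precisely the device that reconciles the two viewpoints and lets the induction close.
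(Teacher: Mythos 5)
Your proof is correct, but it takes a genuinely different route from the paper's. The paper argues softly: it first observes that ``more delayed'' implies no greater depth, then takes any minimal-depth \Fflow, notes that it is either maximally delayed or dominated by some maximally delayed \Fflow of the same (hence minimal) depth, and finally invokes the uniqueness of the maximally delayed layer decomposition (proposition \ref{prop:maximal_delay_unique}) to transfer minimality to \emph{every} maximally delayed \Fflow. You instead prove the stronger, layer-wise set containment \(\bigcup_{n\le k}\Phi_n \subseteq \bigcup_{n\le k}\Lambda_n\) for an arbitrary \Fflow \((D,\Phi)\) against the canonical maximally delayed one, by induction on \(k\), using the explicit characterisation of the layers \(\Lambda_k\) from proposition \ref{prop:maximal_delay_unique} together with conditions (i)--(iii) of the \Fflow definition; your base case essentially re-runs the second half of the proof of lemma \ref{lem:final_layer}, and your inductive step re-derives in one go the content of lemmas \ref{lem:penultimate_layer} and \ref{lem:recursive_decomposition}. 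What your route buys: it establishes the stronger fact that the maximally delayed decomposition dominates every \Fflow as a containment (not merely in cardinality), and it sidesteps the existence step that the paper passes over quickly (that above any \Fflow there sits a maximally delayed one, which needs finiteness and transitivity of the delayedness order). What the paper's route buys is brevity, since it reuses proposition \ref{prop:maximal_delay_unique} only through its uniqueness conclusion. Your handling of the one genuine subtlety --- zeroing the witness column on isolated non-output vertices so that its support matches the characterisation of \(\Lambda_{k+1}\), which is harmless because the corresponding rows and columns of \(G\) vanish and \(u\) itself cannot be isolated --- is exactly right and is the point where a naive version of this induction would fail.
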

\begin{proof}
  First, note that if \((C,\Lambda)\) is more delayed than \((D,\Phi)\), then in
  particular,
  \begin{equation}
    \abs{V} = | \bigcup_{n=0}^\abs{\Lambda} \Lambda_n |
    \geqslant | \bigcup_{n=0}^\abs{\Lambda} \Phi_n |,
  \end{equation}
  so that \(\abs{\Lambda} \leqslant \abs{\Phi}\).
  Assume now that \((D,\Phi)\) has minimal depth, then any \Fflow that is more
  delayed has the same depth.
  It follows that either \((D,\Phi)\) is maximally delayed and has minimal
  depth, or there is a maximally delayed \Fflow that is more delayed than
  \((D,\Phi)\) thus has the same depth.
  But by proposition \ref{prop:maximal_delay_unique}, every maximally delayed
  \Fflow has the same layer decomposition for a given open graph, so that every
  maximally delayed \Fflow has minimal depth.
\end{proof}

Note however that a minimal depth decomposition is not necessarily maximally
delayed. For example, we can always measure the entirety of the inputs first
without changing the depth, but this measurement order is not always maximally
delayed since this allows us to move inputs into earlier layers.
Since the algorithm is constructed such that it finds a maximally delayed
\Fflow:
\begin{theorem}
  The algorithm outputs an \Fflow of minimal depth.
  \label{thm:algo_optimal}
\end{theorem}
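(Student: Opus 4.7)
The plan is to reduce the claim to a statement about maximally delayed \Fflows. By Proposition \ref{prop:maximal_optimal_depth}, every maximally delayed \Fflow has minimal depth, so it suffices to argue that whenever \textsc{F-Flow} succeeds, the layer decomposition \(\Lambda\) it produces is the unique maximally delayed one characterised by Proposition \ref{prop:maximal_delay_unique}. Theorem \ref{thm:algo_correctness} already gives that the output is some valid \Fflow, and one direction of the desired equivalence (``if a maximally delayed \Fflow exists, the algorithm returns one''): conversely, if the algorithm fails, then by Theorem \ref{thm:algo_correctness} no \Fflow at all exists. So the only work left is to match the layers produced by the algorithm with the layers characterised in Proposition \ref{prop:maximal_delay_unique}.

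I would proceed by induction on the call index \(k\) to \textsc{F-Flow-Aux}. For the base case, the outer procedure \textsc{F-Flow} sets \(\mathrm{layer}(0) = O \cup \{u \in V \mid \forall v,\, G_{uv} = 0\}\), which by Lemma \ref{lem:final_layer} is exactly \(\Lambda_0\) for any maximally delayed \Fflow. For the inductive step, assume that on entry to the \(k\)-th call, the third parameter \(\overline{O}\) equals \(O \cup \bigcup_{n<k} \Lambda_n\) from Proposition \ref{prop:maximal_delay_unique}. The loop at line~8 adds \(v \in \overline{O}^\mathsf{c}\) to the current layer \(L\) iff the linear system
\[
  G[\overline{O}^\mathsf{c}, \overline{O} \setminus I]\, \vec{c}
  \;=\; b\, 1_{\{v\}} \;-\; a\, G[\overline{O}^\mathsf{c}, \{v\}]
\]
has a solution over \(\F\), where \((a,b) = \lambda(v)\). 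I would then extend any such solution \(\vec{c}\) to a vector \(c \in \F^V\) by setting \(c_v = a\) and \(c_w = 0\) for \(w \notin (\overline{O} \setminus I) \cup \{v\}\). Direct computation shows \((c_u, (Gc)_u) = \lambda(u)\) for \(u = v\), and \((Gc)_u = 0\) for every \(u \in \overline{O}^\mathsf{c} \setminus \{v\}\), and \(c_u = 0\) for every such \(u\) by construction. Conversely, any \(c \in \F^V\) witnessing membership of \(v\) in \(\Lambda_k\) according to Proposition \ref{prop:maximal_delay_unique} restricts to a solution of the above linear system. Hence \(L\) coincides with \(\Lambda_k\), closing the induction.

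The main obstacle is the bookkeeping in this last step, namely verifying that the algorithm's linear system is equivalent to the existential clause in Proposition \ref{prop:maximal_delay_unique}. One has to check that the condition \(C[I,V] = 0\) is automatically enforced because the variables range over \(\overline{O} \setminus I\); that the diagonal condition \((c_v, (Gc)_v) = \lambda(v)\) is encoded respectively by fixing \(c_v = a\) and by reading off the \(v\)-th row of the right-hand side; and that the off-diagonal zero conditions in the rows indexed by \(\overline{O}^\mathsf{c} \setminus \{v\}\) correspond to the vanishing rows of the right-hand side. None of these is deep, but the indexing needs to be handled carefully.

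Once the induction is complete, the layer decomposition returned by the algorithm equals the unique maximally delayed \(\Lambda\) from Proposition \ref{prop:maximal_delay_unique}, so by Proposition \ref{prop:maximal_optimal_depth} it has minimal depth, yielding the theorem.
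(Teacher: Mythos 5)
Your proposal is correct and follows essentially the same route as the paper: it reduces the claim to showing the algorithm's output is maximally delayed by matching each layer against the characterisation in Proposition \ref{prop:maximal_delay_unique} (via Lemmas \ref{lem:final_layer}--\ref{lem:recursive_decomposition}), then invokes Proposition \ref{prop:maximal_optimal_depth}. The only difference is presentational: you phrase the layer-matching as an explicit induction and spell out both directions of the equivalence between the algorithm's linear system and the existential clause, whereas the paper proves the inclusion \(\Lambda_1 \subseteq \Phi_1\) explicitly and cites correctness for the converse.
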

\begin{proof}
  Assume that the algorithm succeeds with output \((D,\Phi)\).
  We show that this output \Fflow \((D,\Phi)\) is maximally delayed.
  Firstly, we show that the output flow has \(\Phi_1 = \Lambda_1\) from lemma
  \ref{lem:penultimate_layer}.
  We know that \(\Phi_1 \subseteq \Lambda_1\) and it is clear from the
  definition of the algorithm that
  \begin{align}
    \Phi_1 &= \Big\{u \in O^c \hspace{1mm}\mid\hspace{1mm}
          \exists \vec{c} \in \Z_d^{|O \setminus I|} \quad \text{s.t.} \quad
          G[O^\mathsf{c},O \setminus I] \vec{c} = b1_{\{v\}} - aG[O^\mathsf{c},\{v\}] \Big\}
  \end{align}

  Let \(u \in \Lambda_1\), that is there is some vector \(\vec{c} \in
  \Z_d^{|V|}\) such that
  \begin{equation}
    \begin{cases}
      (c_u, (Gc)_u) = \lambda(u) \\
      \forall v \notin O \cup \{u\}, c_v = (Gc)_v = 0
    \end{cases}
  \end{equation}
  Then, for any \(v \in O^\mathsf{c}\),
  \begin{align}
    (G[O^\mathsf{c},V] c)_v &= \sum_{j \in V} G_{vj}c_j = \sum_{j \in O \cup \{u\}} G_{vj}c_j \\ 
    &= \sum_{j \in O} G_{vj}c_j + aG_{vu} = (Gc)_v + aG_{vu} = b \delta_{vu} + aG_{vu}
  \end{align}
  from which we see that \(u \in \Phi_1\) whence \(\Phi_1 = \Lambda_1\).

  Now, \(\Phi_2\) is calculated in the next call to \textsc{Z-Flow-Aux} where
  the open graph passed as argument is \((G,I,O \cup \Phi_1,\lambda|_{(O \cup
    \Phi_1)})\).
  Using the same argument as for \(\Phi_1\), \(\Phi_2\) must match the layer
  \(\Lambda_2\) obtained by applying lemma \ref{lem:penultimate_layer} to the
  \Fflow resulting from \ref{lem:recursive_decomposition}.
  
  Then, using the same recursion as in the proof of proposition
  \ref{prop:maximal_delay_unique}, we see that \((D,\Phi)\) is maximally delayed.
  It follows from proposition \ref{prop:maximal_optimal_depth} that the \Fflow
  output by the algorithm has optimal depth.
\end{proof}

\section*{Conclusion and future work}

We have defined a flow condition suitable for qudit MBQC, and shown that it is
sufficient to obtain deterministic MBQCs.
We leave two major open questions for future work: firstly, we have only
considered the case of fields of prime cardinality. Most of our results
generalise straightforwardly to the case of power-of-prime fields, with the
exception of theorem \ref{thm:converse} (or more specifically, lemma
\ref{lem:local_tomography}). Secondly, an important tool for studying MBQC is
ancilla-less circuit extraction. Work in this direction was started by the first
and third authors in \cite{booth_flow_2021} where an algorithm was found for
measurement patterns where all the measurements belong to \(\mathcal{M}(0,1)\)
(the measurement space of \(Z\)), but no extraction algorithm is known for all
measurement spaces.

\paragraph{Acknowledgements.} R.\,I.\,B., D.\,M. and S.\,P. were supported by the
ANR VanQuTe project (ANR-17-CE24-0035).

\printbibliography

\appendix
\section{Proof of lemma \ref{lem:local_tomography_multiple}}
\label{app:local_tomography}
\begin{lemma}
  Let \(\ket{\phi}\) be a state of a register \(V\) of qudits,
  \(Q\) a Pauli operator and fix some \(v \in V\). If for every measurement \(M
  \in \mathcal{M}(Q_v)\) of the qudit \(v\) and every \(m \in \mathbb{F}\), we
  have
  \begin{equation}
    \| \langle m : M \mid \phi \rangle \| = \frac{1}{\sqrt{d}}
    \label{eq:tomographyMaximallyEntangled}
  \end{equation}
  then \(\ket{\phi}\) has a Schmidt decomposition of the form
  \begin{equation}
      \ket{\phi} = \sum_{x \in \F} c_x \ket{x:Q}\otimes \ket{\psi_x},
  \end{equation}
  where \(\ket{x : Q}\) is an eigenvector of \(Q\) associated with eigenvalue
  \(\omega^x\), and we take the coefficients \(c_x\) to be real and
  non-negative.
  \label{lem:local_tomography}
\end{lemma}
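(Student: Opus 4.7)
The plan is to parametrise the measurement vectors of $\mathcal{M}(Q_v)$ in the eigenbasis of $Q$, expand $\ket{\phi}$ in that same basis on qudit $v$, and exploit the uniformity hypothesis simultaneously over this entire family to force the desired Schmidt structure. The first step is to pin down the shape of an arbitrary fixpoint $\ket{0:M}$. By Proposition \ref{prop:measurement_PVM} the eigenvectors of $M$ take the form $\ket{k:M} = Q^{-k}\ket{0:M}$, and their orthonormality (forced by unitarity of $M$ with non-degenerate spectrum) is equivalent to the moment conditions $\bra{0:M}Q^{\ell}\ket{0:M} = \delta_{\ell,0}$ for all $\ell \in \F$. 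Expanding $\ket{0:M} = \sum_k \alpha_k \ket{k:Q}$ and applying discrete Fourier inversion, these constraints collapse to $|\alpha_k|^2 = 1/d$ for every $k$. Hence the admissible fixpoints are exactly $\ket{0:M} = \tfrac{1}{\sqrt{d}} \sum_k e^{i\theta_k}\ket{k:Q}$ as $(\theta_k) \in [0,2\pi)^d$ varies, and correspondingly
\begin{equation*}
  \ket{m:M}_v \;=\; \tfrac{1}{\sqrt{d}}\sum_{k\in\F} e^{i\theta_k}\omega^{-km}\ket{k:Q}_v.
\end{equation*}

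Next, I would expand $\ket{\phi} = \sum_k \ket{k:Q}_v \otimes \ket{\chi_k}$ with $\ket{\chi_k} \in \mathcal{H}^{\otimes V\setminus\{v\}}$. A direct computation yields
\begin{equation*}
  \|\langle m:M|\phi\rangle\|^2 \;=\; \tfrac{1}{d}\sum_{j,k} e^{i(\theta_j - \theta_k)}\omega^{m(k-j)}\,\langle\chi_j|\chi_k\rangle,
\end{equation*}
which must equal $1/d$ for every $m$ and every choice of $(\theta_k)$. The diagonal ($j=k$) contributions reproduce normalisation $\sum_k \|\chi_k\|^2 = 1$; for $j \neq k$, the functions $e^{i(\theta_j - \theta_k)}$ are linearly independent characters on the torus, so varying the phases forces each individual coefficient to vanish, giving $\langle\chi_j|\chi_k\rangle = 0$.

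To finish, set $c_k := \|\chi_k\| \geq 0$ and $\ket{\psi_k} := \ket{\chi_k}/c_k$ (picking an arbitrary unit vector when $c_k = 0$); the $\{\ket{\psi_k}\}$ are then orthonormal and deliver the claimed Schmidt decomposition $\ket{\phi} = \sum_k c_k \ket{k:Q}_v \otimes \ket{\psi_k}$. The principal obstacle in this argument is really the first step: recognising that the full family of fixpoints $\ket{0:M}$ consists precisely of uniform superpositions (up to phases) over the $Q$-eigenbasis is what makes $\mathcal{M}(Q_v)$ rich enough for the subsequent character argument to bite. The rest is Fourier duality between the outcome variable $m$ (dual to $\F$) and the phase parameters $(\theta_k)$ (dual to an extra torus).
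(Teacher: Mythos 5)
Your proof is correct, and it takes a genuinely different and noticeably more direct route than the paper's. Your key move is to characterise the whole family of admissible fixpoints at the outset: orthonormality of the eigenbasis \(\{Q^{-k}\ket{0:M}\}_k\) forces the moment conditions \(\bra{0:M}Q^{\ell}\ket{0:M} = \delta_{\ell,0}\), hence \(\ket{0:M} = \tfrac{1}{\sqrt{d}}\sum_k e^{i\theta_k}\ket{k:Q}\) with the phases essentially free (the converse inclusion --- that every such phase choice really is realised by some \(M \in \mathcal{M}(Q_v)\) --- deserves one explicit line, but it is immediate from proposition \ref{prop:measurement_space_unitary} / corollary \ref{cor:measurement_angles}, since a \(Q\)-commuting special unitary is exactly a diagonal phase matrix in the \(Q\)-eigenbasis). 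Expanding \(\ket{\phi}\) in the \(Q\)-eigenbasis on \(v\) and letting \((\theta_k)\) range over the torus, linear independence of the characters \(e^{i(\theta_j - \theta_k)}\) annihilates every off-diagonal Gram entry \(\langle\chi_j|\chi_k\rangle\), which is precisely the claimed Schmidt structure. The paper instead fixes a single \(M\), expands \(\ket{\phi}\) in \emph{its} eigenbasis to obtain the matrix \(\Psi\), analyses the orbit of \(M\) under the generators \(V_M^{-1}R_{k,l}(\xi)V_M\) of the commutant of \(Q\), extracts the constraint \(\Psi_{\beta+\beta'\bullet}^{*}\Psi_{\beta-\beta'\bullet} = r_{\beta'}\) from the vanishing of the \(\cos 2\xi\) Fourier coefficient, and only then performs a discrete Fourier transform on the rows to exhibit the \(Q\)-eigenbasis as the Schmidt basis. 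Your version buys brevity and transparency by working in the \(Q\)-eigenbasis from the start and replacing the generator-by-generator trigonometric analysis with a single linear-independence-of-characters argument; it also reveals that the hypothesis is stronger than needed, since you never have to vary the outcome \(m\). The paper's heavier route has the minor advantage of staying entirely within the conjugation parametrisation of measurement spaces it already set up.
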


\begin{proof}
  Pick some \(M \in \mathcal{M}(Q)\), we can write
  \begin{equation}
    \ket{\phi} = \sum_{m \in \F} \ket{m : M} \ket{\phi_m}
    \qq{where} \ket{\phi_m} \coloneqq \ip{m:M}{\phi}
  \end{equation}
  and \(\norm{\ip{m:M}{\phi}} = \frac{1}{\sqrt{d}}\).\newline\newline
  Letting \(\{\ket{\psi_m}\}\) be the collection of vectors obtained by
  orthonormalising \(\{\ket{\phi_m}\}\), we can expand \(\ket{\phi}\)
  in this basis:
  \begin{equation}
    \ket{\phi} = \frac{1}{\sqrt{d}} \sum_{m,n \in \F} \Psi_{mn} \ket{m:M} \ket{\psi_n},
    \quad\text{and for any}\:m \in \F, \quad \|\Psi_{m\bullet}\|^2 = 1 
  \end{equation}
  where \(\Psi\) is therefore a \(d \times p\) matrix such that \(p\) is
  the dimension of the subspace of \(\mathcal{H}\) generated by the
  \(\ket{\phi_m}\) and we denote \(\Psi_{m\bullet}\) the $m^{\textrm{th}}$ line vector of $\Psi$.\newline

  We know that, for every rotation $U$ in $SU(d)$ preserving \(Q\) and
  every $M$ in $\mathcal{M}(Q_v)$, $UMU^\dagger$ is also in $\mathcal{M}(Q_v)$.
  The group of all such rotations acts on \(\Psi\) from the left via the Hilbert
  space representation, and this action is generated by the rotations of the form
  \(V_M^{-1}R_{k,l}(\xi)V_M\), where \(V_M\) is the $d$-dimensional discrete
  Fourier transform matrix\footnote{Explicitely, \(V_M\) is given by
    \(\mel{m:M}{V_M}{n:M} = \phi(mn)\).} in the eigenbasis of \(M\), and \(R_{k,l}(\xi)\) is
  the diagonal matrix given by \(k \in \F\), \(l \in \F^*\) and $\xi \in \R$, by
  \begin{equation}
    R_{k,l}(\xi)_{mm} \coloneqq
    \begin{cases}
      e^{-i\xi} \qif m = k; \\
      e^{i\xi} \qif m = k+l; \\
      1 \qq{otherwise.}
    \end{cases}
  \end{equation}
  According to Eq.~(\ref{eq:tomographyMaximallyEntangled}), applying a rotation preserving $Q$ to $v$ preserves the outcomes' probabilities.
  As such, we deduce that the action of rotations \(V_M^{-1}R_{k,l}(\xi)V_M\) on matrix $\Psi$ will preserve
  the norm of its line vectors. Namely, for every \(k \in \F\), \(l \in \F^*\) and $\xi \in \R$,
  \begin{equation}
    \| \Psi_{m\bullet} \|^2 = \left\| \left(D_{k,l,\xi}\Psi\right)_{m\bullet} \right\|^2 \quad \text{ where, } \quad D_{k,l,\xi} := V_M^{-1}R_{k,l}(\xi)V_M.
    \label{eq:normEquality}
  \end{equation}
  Below, we explicit the right side of this equality to find which $\Psi$ satisfy Eq.~(\ref{eq:normEquality}). First, we
  compute the transformed matrix' line vectors:
  \begin{align}
    (D_{k,l,\xi}\Psi)_{m \bullet}
    &= \Psi_{m\bullet} + \frac{1}{d} \sum_{\alpha \in \F} \Psi_{\alpha\bullet } \left( \phi(k(m-\alpha)) (e^{-i\xi} - 1) +
      \phi((k+l)(m-\alpha)) (e^{i\xi} - 1) \right) \\
    &= \Psi_{m\bullet} + P_m^{k,l,1} \sin\xi + P_m^{k,l,2} (\cos\xi - 1)
  \end{align}
  where
  \begin{align}
    P_m^{k,l,1} &\coloneqq
                  -\frac{2}{d} \sum_{\alpha \in \F} \Psi_{\alpha \bullet } \omega^{(k+\frac{l}{2})(m-\alpha)} \sin(\frac{\pi l}{d}(m-\alpha))\text{ and }\\
    P_m^{k,l,2} &\coloneqq
                  \frac{2}{d} \sum_{\alpha \in \F} \Psi_{\alpha \bullet } \omega^{(k+\frac{l}{2})(m-\alpha)} \cos(\frac{\pi l}{d}(m-\alpha)).
  \end{align}

  We rewrite Eq.~(\ref{eq:normEquality}) as,
  \begin{align}
    \| \Psi_{m\bullet} \|^2 &=  \left\| \left(D_{k,l,\xi}\Psi\right)_{m\bullet} \right\|^2 \\
                            &=  \left\|\Psi_{m \bullet} +  P_m^{k,l,1}\sin\xi + P_m^{k,l,2} (\cos\xi - 1)\right\|^2\\
                            &= \| \Psi_{m\bullet} \|^2 + A + B\sin\xi + C\cos\xi + D\cos 2\xi + E \sin 2\xi,
  \end{align}
  from which we deduce:
  \begin{equation}
    A + B\sin\xi + C\cos\xi + D\cos 2\xi + E \sin 2\xi = 0. \label{eq:fourierDecompo}
  \end{equation}
  We specify these five alphabetic constants for our kind reader while emphasizing that only the expression of $D$ will be used thereafter: 
  \begin{subequations}
    \label{eq:dirtyEquations}
    \begin{align}
      A &:=  \frac{3}{2}\left\|P_m^{k,l,2}\right\|^2 - 2\Re\left(\Psi_{m \bullet}P_m^{k,l,2 *}\right) + \frac{1}{2}\left\|P_m^{k,l,1}\right\|^2, \\
      B &:= 2\Re\left(\Psi_{m \bullet}P_m^{k,l,1 *}\right) - 2\Re\left(P_m^{k,l,1}P_m^{k,l,2 *}\right), \\
      C &:= 2\Re\left(\Psi_{m \bullet}P_m^{k,l,2 *}\right) - 2\left\|P_m^{k,l,2}\right\|^2, \\
      D&:=\frac{1}{2}\left(\left\|P_m^{k,l,2}\right\|^2 - \left\|P_m^{k,l,1}\right\|^2\right), \\
      E&:=2\Re\left(P_m^{k,l,1}P_m^{k,l,2 *}\right),
    \end{align}
  \end{subequations}
  where $P_m^{k,l,i *}$ denotes the complex conjugate of $P_m^{k,l,i}$.\newline
  We know that \{$\cos(m\xi)$, $\sin(n\xi)\}_{m,n\in \N}$ forms an orthogonal set in the space of periodic functions of period $2\pi$ with respect to the Hermitian form $\langle f,g\rangle := \int_{-\pi}^{\pi}f^*(t)g(t)\textrm{d}t$, and as such, the five alphabetic constants of the left side of Eq.~(\ref{eq:fourierDecompo}) must be zero.\newline\newline
  We develop the two terms of the constant $D$, $\forall m, k\in\F$, $l\in\F^*$ , and obtain:
  \begin{subequations}
    \begin{align}
      \left\|P_m^{k,l,1}\right\|^2 &= \frac{4}{d^2}\sum_{\alpha,\alpha'\in\F} \Psi_{\alpha \:\bullet}^*\Psi_{\alpha' \bullet} \omega^{(k+\frac{l}{2})(\alpha - \alpha')}\cos\left(\frac{\pi l}{d}(m-\alpha)\right)\cos\left(\frac{\pi l}{d}(m - \alpha')\right),\\
      \left\|P_m^{k,l,2}\right\|^2 &= \frac{4}{d^2}\sum_{\alpha,\alpha'\in\F} \Psi_{\alpha \:\bullet }^*\Psi_{\alpha' \bullet }\omega^{(k+\frac{l}{2})(\alpha - \alpha')}\sin\left(\frac{\pi l}{d}(m-\alpha)\right)\sin\left(\frac{\pi l}{d}(m - \alpha')\right).
    \end{align}
  \end{subequations}
  Using the addition formulas of trigonometry, we deduce, 
  \begin{equation}
    D = \frac{2}{d^2}\sum_{\alpha,\alpha'\in\F} \Psi_{\alpha\: \bullet }^* \Psi_{\alpha' \bullet }\omega^{(k+\frac{l}{2})(\alpha - \alpha')}\cos\left(\frac{\pi l}{d}(2m - \alpha - \alpha')\right) = 0.
  \end{equation}
  We introduce the following change of variables \(2\beta \coloneqq \alpha + \alpha'\) and \(2\beta' \coloneqq \alpha - \alpha'\), such that we obtain,
  \begin{equation}
    \forall k,n \in \F\text{ and } l\in\F^*,\quad\quad \sum_{\beta,\beta'\in \F} \omega^{2(k+\frac{l}{2})\beta'} \Psi_{\beta + \beta' \bullet  }^*\Psi_{  \beta - \beta' \bullet}\cos\left(\frac{2\pi l}{d}(m - \beta)\right) = 0.
  \end{equation}
  Now, for any $l \in \F^*$, the square matrix given by $\Omega_{k,\beta'} := \omega^{2(k+\frac{l}{2})\beta'}$ is invertible.
  As a consequence, we deduce from the previous equation that $\forall m\in \F$ and $\forall l\in \F^*$,
  \begin{equation}
    \sum_{\beta\in \F}  \Psi_{\beta + \beta' \bullet}^*\Psi_{ \beta - \beta' \bullet}\cos\left(\frac{2\pi l}{d}(m - \beta)\right) = 0.
  \end{equation}
  Developing the cosine, we obtain
  \begin{equation}
    \cos\left(\frac{2\pi lm}{d}\right)\sum_{\beta\in \F}  \Psi_{\beta + \beta' \bullet }^*\Psi_{\beta - \beta' \bullet }\cos\left(\frac{2\pi l\beta}{d}\right) + \sin\left(\frac{2\pi lm}{d}\right)\sum_{\beta\in \F}  \Psi_{\beta + \beta' \bullet }^*\Psi_{\beta - \beta' \bullet }\sin\left(\frac{2\pi l\beta}{d}\right) = 0,
  \end{equation}
  from which we deduce, using again the argument used in Eq.~(\ref{eq:fourierDecompo}), that $\forall l\in \F^*$,
  \begin{subequations}
    \begin{align}
      \sum_{\beta\in \F}  \Psi_{\beta + \beta' \bullet }^*\Psi_{\beta - \beta' \bullet }\cos\left(\frac{2\pi l\beta}{d}\right) &= 0,\\
      \sum_{\beta\in \F}  \Psi_{\beta + \beta' \bullet }^*\Psi_{\beta - \beta' \bullet }\sin\left(\frac{2\pi l\beta}{d}\right) &= 0.
    \end{align}
  \end{subequations}
  These equations force the following conclusion: for all $\beta\in\F$, the Hermitian product of $\Psi_{\beta + \beta' \bullet }$ and $\Psi_{\beta - \beta' \bullet }$ depends only of $\beta'$, namely:
  \begin{equation}
    \Psi_{\beta + \beta' \bullet }^*\Psi_{\beta - \beta' \bullet } = r_{\beta'}. \label{eq:constantProduct}
  \end{equation}\newline
  At this point, we define a ``Fourier transform'' of our line vectors $\Psi_{m\bullet}$ as
  \begin{equation}
    \Psi^F_{\gamma\bullet} \coloneqq \frac{1}{\sqrt{d}}\sum_{m\in\F} \Psi_{m \bullet} \omega^{m\gamma}.
  \end{equation}
  This tranformation is invertible as:
  \begin{equation}
    \Psi_{m \bullet} = \frac{1}{\sqrt{d}}\sum_{\gamma\in\F} \Psi^F_{\gamma \bullet} \omega^{-m\gamma},
  \end{equation}
  so that going back to $\ket{\phi}$,
  \begin{align}
    \ket{\phi} &= \frac{1}{\sqrt{d}}\sum_{m,n\in\F} \Psi_{mn}\ket{m}\ket{\psi_n}\\
               &=\frac{1}{d}\sum_{m,n\in\F}\left(\sum_{\gamma \in\F}\Psi^F_{\gamma n}\omega^{-m\gamma}\right)\ket{m}\ket{\psi_n}\\
               &= \frac{1}{d}\sum_{m,\gamma\in\F}\omega^{-m\gamma}\ket{m}\sum_{n\in\F}\Psi^F_{\gamma n}\ket{\psi_n}\\
               &= \frac{1}{\sqrt{d}}\sum_{\gamma\in\F} \ket{-\gamma:Q}\ket{\psi^{F}_{\gamma}},
  \end{align}
  where $\ket{\psi^{F}_{\gamma}} := \sum_{n\in\F}\Psi^F_{\gamma n}\ket{\psi_n}$. 
  Making good use of Eq.\@(\ref{eq:constantProduct}), we find that for $\gamma_1,\gamma_2\in\F$
  \begin{align}
    \braket{\psi^F_{\gamma_1}}{\psi^F_{\gamma_2}} &= \sum_{n\in\F}\Psi^{F*}_{\gamma_1 n}\Psi^{F}_{\gamma_2 n'}\braket{\psi_n}{\psi_{n'}}\\
                                                  &= \sum_{n\in\F}\Psi^{F*}_{\gamma_1 n}\Psi^{F}_{\gamma_2 n}\\
                                                  &= \frac{1}{d}\sum_{n\in\F}\left(\sum_{m_1,m_2\in\F}\Psi_{m_1n}^*\Psi_{m_2n}\omega^{-m_1\gamma_1 + m_2\gamma_2}\right)\\
                                                  &=\frac{1}{d}\sum_{m_1,m_2\in\F}\left(\sum_{n\in\F}\Psi_{m_1n}^*\Psi_{m_2n}\right)\omega^{-m_1\gamma_1 + m_2\gamma_2}\\
    \textrm{according to Eq.~(\ref{eq:constantProduct}),}\quad&= \frac{1}{d}\sum_{m_1,m_2\in\F}r_{\frac{m_1-m_2}{2}}\omega^{-m_1\gamma_1 + m_2\gamma_2}\\
                                                  &= \frac{1}{d}\sum_{\alpha_1,\alpha_2\in\F}r_{\alpha_2}\omega^{-(\alpha_1 + \alpha_2)\gamma_1 + (\alpha_1 - \alpha_2)\gamma_2}\\
    \textrm{summing over $\alpha_1$, }\quad &= \sum_{\alpha_2\in\F}r_{\alpha_2}\omega^{-\alpha_2(\gamma_1 + \gamma_2)}\delta_{\gamma_1,\gamma_2}.
  \end{align}
  The family $\{\ket{\psi_\gamma^F}\}_{\gamma\in\F}$ forms an orthogonal family.
  Note that, depending on the value of the $r_\alpha$, some $\ket{\psi_\gamma^F}$ can be of norm $0$. Nevertheless, whenever the condition of Eq.~(\ref{eq:tomographyMaximallyEntangled}) is met, we have a valid Schmidt decomposition of $\ket{\phi}$ of the form
  \begin{equation}
    \frac{1}{\sqrt{d}}\sum_{\gamma\in\F} \ket{-\gamma:Q}\ket{\psi^{F}_{\gamma}}.
  \end{equation}

\end{proof}

\begin{lemma}
  Let \(\ket{\phi}, \ket{\phi'}\) be two states of a register \(V\) of qudits,
  \(x \in \F^2\) be non-zero and fix some \(n \in N\). If for every measurement
  \(M \in \mathcal{M}(x)\) of the qudit \(n\) and every \(m \in \F\), we
  have
  \begin{equation}
    \langle m : M \mid \phi \rangle \simeq \langle m : M \mid \phi' \rangle \quad
    \text{and} \quad \| \langle m : M \mid \phi \rangle \| = \frac{1}{\sqrt{d}} =
    \| \langle m : M \mid \phi' \rangle \|,
    \label{eq:tomography}
  \end{equation}
  then at least one of the following holds:
  \begin{enumerate}
  \item \(\ket{\phi} \simeq \ket{\phi'}\);
  \item \(\ket{\phi}\) and \(\ket{\phi'}\) are separable and there are \(x,y\in \F\) and \(\ket{\psi} \in \mathcal{H}_{V \setminus \{v\}}\) such
    that
    \begin{equation}
      \ket{\phi} = \ket{x : Q}_v \otimes \ket{\psi} \quad \text{and} \quad
      \ket{\phi'} \simeq \ket{y : Q}_v \otimes \ket{\psi},
    \end{equation}
    where \(\ket{x : Q}\) is the eigenvector of \(Q\) associated with eigenvalue
    \(\omega^x\).
  \end{enumerate}
  \label{lem:local_tomography}
\end{lemma}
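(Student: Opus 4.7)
The strategy is to apply the previous lemma (the uniform-norm, single-site version) separately to $\ket\phi$ and $\ket{\phi'}$ and then exploit the pointwise phase-equivalence to compare the two Schmidt-type decompositions. That lemma yields
\[ \ket{\phi} = \frac{1}{\sqrt d}\sum_{\gamma\in\F}\ket{-\gamma:Q}_n \otimes \ket{\psi_\gamma}, \qquad \ket{\phi'} = \frac{1}{\sqrt d}\sum_{\gamma\in\F}\ket{-\gamma:Q}_n \otimes \ket{\psi'_\gamma}, \]
where $Q = X^a Z^b$ and each of $\{\ket{\psi_\gamma}\}_\gamma$, $\{\ket{\psi'_\gamma}\}_\gamma$ is an orthogonal family in $\mathcal{H}_{V\setminus\{n\}}$, some entries possibly zero. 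Package these into linear maps $B, B' : \mathcal{H}_n \to \mathcal{H}_{V\setminus\{n\}}$ by $B\ket{\gamma:Q} := \frac{1}{\sqrt d}\ket{\psi_{-\gamma}}$ and $B'\ket{\gamma:Q} := \frac{1}{\sqrt d}\ket{\psi'_{-\gamma}}$; a direct computation then gives $\bra{m:M}_n\phi\rangle = B\ket{w(m,M)}$ and similarly for $\ket{\phi'}$, where $\ket{w(m,M)}$ denotes $\ket{m:M}$ complex-conjugated in the $Q$-eigenbasis. The hypothesis becomes $B\ket{w(m,M)} \simeq B'\ket{w(m,M)}$ for every $m,M$.

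Next I identify the family of vectors on which this proportionality is available. By Corollary~\ref{cor:measurement_angles}, every $M\in\mathcal{M}(a,b)$ has the form $UPU^\dagger$ with $U$ diagonal in the $Q$-basis and $\det U = 1$, for a chosen reference $P$. Picking $P$ to be a Fourier-type measurement — so that $|\langle\gamma:Q|m:P\rangle| = 1/\sqrt d$ — a short calculation shows that as $m,M$ vary, the vectors $\ket{w(m,M)}$ exhaust, up to a global phase, every unit vector of the form $\frac{1}{\sqrt d}\sum_\gamma e^{i\xi_\gamma}\ket{\gamma:Q}$. Call this set $\mathbb{T}$; it is a connected torus whose complex linear span is all of $\mathcal{H}_n$. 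The hypothesis is thus that for every $\ket w \in \mathbb{T}$ there exists $c(w)\in U(1)$ with $B\ket w = c(w)\,B'\ket w$.

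I then split on $\mathrm{rank}(B)$, which is at least $1$ because $\ket\phi$ is a unit vector. If $\mathrm{rank}(B) = 1$, exactly one $\ket{\psi_\gamma}$ is nonzero and $\ket\phi = \ket{x:Q}_n \otimes \ket\psi$ for $x = -\gamma$. Then $B'\ket w$ must lie in the one-dimensional image of $B$ for every $\ket w \in \mathbb{T}$; since $\mathbb{T}$ spans $\mathcal{H}_n$, $B'$ is also rank one, and $\ket{\phi'} \simeq \ket{y:Q}_n \otimes \ket\psi$ after absorbing a phase — this is case~2 of the lemma. If $\mathrm{rank}(B) \geq 2$, a mirror argument rules out $\mathrm{rank}(B') = 1$, so both ranks are $\geq 2$. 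Perturbing one component's phase $\xi_{\gamma_0}$ by $\delta$ to pass from $\ket w$ to a nearby $\ket{w'}\in\mathbb{T}$ and subtracting the equations for $\ket w$ and $\ket{w'}$ yields
\[ (c(w') - c(w))\,B'\ket w \;=\; \frac{e^{i\xi_{\gamma_0}}(e^{i\delta}-1)}{\sqrt d}\,\bigl[\,B\ket{\gamma_0:Q} - c(w')\,B'\ket{\gamma_0:Q}\,\bigr]. \]
A nontrivial jump of $c$ would pin $B'\ket w$ into a fixed one-dimensional subspace as $w$ varies on $\mathbb{T}$, contradicting $\mathrm{rank}(B')\geq 2$. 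Hence $c$ is locally constant; by connectedness of $\mathbb{T}$ and complex linearity, it is a single global phase $c_0$, so $B = c_0 B'$ and $\ket\phi \simeq \ket{\phi'}$ — case~1.

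The main obstacle is precisely this last step: the hypothesis only delivers pointwise proportionality with a phase $c(w)$ that \emph{a priori} depends on the chosen measurement, whereas case~1 requires a single global phase. The perturbation-and-connectedness argument is the key technical move needed to globalise this phase in the rank-$\geq 2$ regime; rank~$1$ is by comparison essentially automatic once the Schmidt decompositions from the previous lemma are in hand.
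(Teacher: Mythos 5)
Your overall architecture matches the paper's: apply the single-state lemma to obtain the flat Schmidt-type decompositions of both states, split on the Schmidt rank, send the rank-one case to conclusion (2), and in the rank-$\geq 2$ case show that the measurement-dependent proportionality phase is in fact a single global phase. The packaging via the maps $B,B'$ and the torus $\mathbb{T}$ of flat-modulus vectors is legitimate, and your rank-one case is sound (indeed it handles the mixed-rank situation, which the paper's case split glosses over).

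The gap is exactly at the step you yourself flag as ``the main obstacle'': globalising the phase when $\mathrm{rank}(B),\mathrm{rank}(B')\geq 2$. From your displayed identity, $c(w')\neq c(w)$ only forces $B'\ket{w}$ to lie in $\mathrm{span}\{B\ket{\gamma_0:Q},\,B'\ket{\gamma_0:Q}\}$. This is an \emph{at most two-dimensional} subspace --- the vectors $B\ket{\gamma_0:Q}\propto\ket{\psi_{-\gamma_0}}$ and $B'\ket{\gamma_0:Q}\propto\ket{\psi'_{-\gamma_0}}$ belong to two different orthogonal families and need not be proportional --- and it moreover depends on $\gamma_0$ and, through $c(w')$, on $\delta$. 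It is not a ``fixed one-dimensional subspace'', and confinement of $B'\ket{w}$ to a two-dimensional subspace does not contradict $\mathrm{rank}(B')\geq 2$: if $B'$ has rank exactly two, its entire image already is such a subspace. Hence local constancy of $c$ is not established and the connectedness argument has nothing to propagate. To close this you must extract more from the identity, e.g.\ pair it against the orthogonal directions $\ket{\psi'_{-\gamma}}$ for $\gamma\neq\gamma_0$ and exploit the full freedom in $(\gamma_0,\delta)$ and the remaining phases $\xi_\gamma$. This is essentially what the paper does concretely: it expands $\mel{m:M}{D_{k,l,\xi}}{\phi}$ and $\mel{m:M}{D_{k,l,\xi}}{\phi'}$ for the rotations $D_{k,l,\xi}$ with $k=y$, $l=z-y$ where $y\neq z$ carry nonzero Schmidt coefficients, projects onto $\ket{\psi'_y}$ and $\ket{\psi'_z}$, and uses orthogonality of the family $\{e^{im\xi}\}$ to force all the branch phases $e^{i\alpha_m}$ to coincide, which yields case (1).
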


\begin{proof}
  Assume that both \(\ket{\phi},\ket{\phi'}\) have Schmidt rank \(1\). According
  to the previous lemma, we can write both states as
  \begin{equation}
    \ket{\phi} = \ket{x:Q} \otimes \ket{\psi_x}
    \qand \ket{\phi'} = \ket{y:Q} \otimes \ket{\psi'_y},
  \end{equation}
  using Eq.~(\ref{eq:tomography}),
  \begin{equation}
    \ket{\psi_x} = \sqrt{d} \ip{0:M}{\phi}
    = e^{i\alpha} \sqrt{d} \ip{0:M}{\phi'} = e^{i\alpha} \ket{\psi_y'},
  \end{equation}
  and we are clearly in subcase (2) of the main lemma.

  Now, assuming the Schmidt rank along the partition $\{v;V\setminus\{v\}\}$ of
  both \(\ket{\phi}\) and \(\ket{\phi'}\) is greater than or equal to \(2\).
  According to the previous lemma,
  \begin{equation}
    \ket{\phi} = \sum_{x \in \F} c_x \ket{x:Q}\otimes \ket{\psi_x} \qq{and}
    \ket{\phi'} = \sum_{x \in \F} c'_x \ket{x:Q}\otimes \ket{\psi'_x}.
  \end{equation}
  Then, for any \(m,k,l \in \F\), and any \(\xi \in \mathbb{T}^d\), we have
  \begin{align}
    \ip{m:M}{\phi} &= e^{i\alpha_m} \ip{m:M}{\phi'}, \label{eq:phaseEquality} \\
    \mel{m:M}{D_{k,l,\xi}}{\phi} &= e^{i\beta(k,l,\xi,m)} \mel{m:M}{D_{k,l,\xi}}{\phi'}, \label{eq:phaseEqualityGen}
  \end{align}
  where $D_{k,l,\xi}$ is defined as in Eq.~(\ref{eq:normEquality}) and $\beta$
  is a function of the different parameters which define the rotation.
  Developing the right-hand side of the previous equation we find
  \begin{align}
    \mel{m:M}{D_{k,l,\xi}}{\phi'} &= \sum_x c_x' \mel{m:M}{D_{k,l,\xi}}{x:Q} \ket{\psi'_x} \\
                                  &= \sum_x \left[\omega^{mx}
                                    + \frac{1}{d} \sum_n \omega^{nx} \left( \omega^{k(m-n)} \left(e^{-i\xi} - 1\right)
                                    + \omega^{(k+l)(m-n)} \left(e^{i\xi} - 1\right) \right)\right] c'_x \ket{\psi'_x}.
  \end{align}
  Likewise, for the left-hand side, we have for any \(m,k,l \in \F\), and any
  \(\xi \in \mathbb{T}^d\),
  \begin{align}
    \bra{m:M} &D_{k,l,\xi} \ket{\phi} \\ &= \ip{m:M}{\phi}
                                           + \frac{1}{d} \sum_n \Big(\omega^{j(m-n)} (e^{-i\xi}-1)
                                           \omega^{(j+k)(m-n)} (e^{i\xi}-1) \Big) \ip{n:M}{\phi} \\
              &= e^{i\alpha_m} \ip{m:M}{\phi'}
                + \frac{1}{d} \sum_n \Big(\omega^{j(m-n)} (e^{-i\xi}-1)
                \omega^{(j+k)(m-n)} (e^{i\xi}-1) \Big) e^{i\alpha_n} \ip{n:M}{\phi'} \\
              &= \sum_{x} \left[ e^{i\alpha_m} \omega^{mx}
                + \frac{1}{d} \sum_n e^{i\alpha_n} \omega^{nx} \left( \omega^{k(m-n)} (e^{-i\xi}-1)
                \omega^{(k+l)(m-n)} (e^{i\xi}-1) \right) \right] c'_x \ket{\psi'_x},
  \end{align}
  where we have used Eq.~(\ref{eq:phaseEquality}) between the first two lines.
  By identifying components along the orthonormal basis elements
  \(\{\ket{\psi_x'}\}\) and removing terms where \(c_x' = 0\), we can write
  Eq.~(\ref{eq:phaseEqualityGen}) as
  \begin{equation}
    \begin{aligned}
      e^{i\beta(j,k,\xi,m)} &\Big( \omega^{mx} + \frac{1}{d} \sum_n \omega^{nx}
      \omega^{k(m-n)} \left(e^{-i\xi} - 1\right)
      + \frac{1}{d} \sum_n \omega^{nx} \omega^{(k+l)(m-n)} \left(e^{i\xi} - 1\right) \Big) \\
      =& \omega^{mx} e^{i\alpha_m} + \frac{1}{d} \sum_n e^{i\alpha_n}
      \omega^{nx} \omega^{k(m-n)} \left(e^{-i\xi} - 1\right) + \frac{1}{d}
      \sum_n e^{i\alpha_n} \omega^{nx} \omega^{(k+l)(m-n)} \left(e^{i\xi} -
        1\right).
    \end{aligned}
  \end{equation}

  Since \(\ket{\phi'}\) has Schmidt rank of at least \(2\), we can find \(y,z
  \in \F\) such that \(y \neq z\), \(c_y' \neq 0\) and \(c_z' \neq 0\). For
  the next part, let \(k=y\) and \(l=z-y\) such that the phase of $D_{k,l,\xi}$
  is applied on the two non-zero components.\newline From now on, we note
  \(\beta(\xi,m) \coloneqq \beta(y,z-y,\xi,m)\). Taking the coefficients along
  \(\ket{\psi_y'}\), we rewrite the previous equation, for any \(\xi \in
  \mathbb{T}\) and \(m \in \F\), as
  \begin{equation}
    \begin{aligned}
      e^{i\beta(\xi,m)} &\omega^{my} e^{-i\xi} \\&= \omega^{my} e^{i\alpha_m} +
      \frac{1}{d} \sum_n e^{i\alpha_n} \omega^{ny} \omega^{y(m-n)}
      \left(e^{-i\xi} - 1\right) + \frac{1}{d} \sum_n e^{i\alpha_n} \omega^{ny}
      \omega^{z(m-n)} \left(e^{i\xi} - 1\right),
    \end{aligned}
  \end{equation}
  taking the coefficients along \(\ket{\psi_z'}\) we extract a different
  equation,
  \begin{equation}
    \begin{aligned}
      e^{i\beta(\xi,m)} &\omega^{mz} e^{i\xi} \\&= \omega^{mz} e^{i\alpha_m} +
      \frac{1}{d} \sum_n e^{i\alpha_n} \omega^{nz} \omega^{y(m-n)}
      \left(e^{-i\xi} - 1\right) + \frac{1}{d} \sum_n e^{i\alpha_n} \omega^{nz}
      \omega^{z(m-n)} \left(e^{i\xi} - 1\right).
    \end{aligned}
  \end{equation}

  Finally, for any \(\xi \in \mathbb{T}\) and \(m \in \F\),
  \begin{equation}
    e^{i\beta(\xi,m)}
    =  e^{i\xi}e^{i\alpha_m}
    + \frac{1}{d} \sum_n e^{i\alpha_n} \left(1 -  e^{i\xi}\right)
    + \frac{1}{d} \sum_n e^{i\alpha_n} \omega^{(z-y)(m-n)} \left(e^{i2\xi} -  e^{i\xi}\right)
  \end{equation}
  and
  \begin{equation}
    e^{i\beta(\xi,m)}
    =  e^{-i\xi}e^{i\alpha_m}
    + \frac{1}{d} \sum_n e^{i\alpha_n} \omega^{(y-z)(m-n)} \left(e^{-i2\xi} -  e^{-i\xi}\right)
    + \frac{1}{d} \sum_n e^{i\alpha_n} \left(1 -  e^{-i\xi}\right). 
  \end{equation}
  So, the right sides of both equations are equal. However, we can use again the
  argument below Eq.~(\ref{eq:dirtyEquations}), $\{ e^{m\xi}\}_{m\in \N}$ is an
  orthogonal set in the space of periodic functions. As such, taking the terms
  in $e^{2\xi}$ and $e^{\xi}$,
  \begin{subequations}
    \begin{align}
      \sum_{n}e^{i\alpha_n}\omega^{(z-y)(m-n)} &= 0\\
      e^{i\alpha_m} -  \frac{1}{d} \sum_n e^{i\alpha_n} - \sum_{n}e^{i\alpha_n}\omega^{(z-y)(m-n)} &= 0.
    \end{align}
  \end{subequations}

  Instantly, we get, for all \(m\),
  \begin{equation}
    e^{i\alpha_m} = \frac{1}{d} \sum_n e^{i\alpha_n} \qq{and in particular} e^{i\alpha_m} = e^{i\alpha_0}.
  \end{equation}
  We note this common phase $\alpha$, and this implies by direct calculation
  that
  \begin{equation}
    c_x \ket{\psi_x} = e^{i\alpha} c'_x \ket{\psi'_x}
  \end{equation}
  Based on this result, we can conlude that we are in subcase (1) of the lemma:
  \begin{align}
    \ket{\phi} &= \sum_x c_x \ket{x:Q} \otimes \ket{\psi_x}, \\
               &= \sum_x \ket{x:Q} \otimes (c_x \ket{\psi_x}), \\
               &= \sum_x \ket{x:Q} \otimes (e^{i\alpha} c'_x \ket{\psi'_x}),\\
               &= e^{i \alpha} \sum_x c'_x \ket{x:Q} \otimes \ket{\psi'_x}, \\
               &= e^{i\alpha}\ket{\phi'},
  \end{align}
  as desired.

  We have shown that any choice of \(\ket{\psi},\ket{\psi'}\) which verify the
  conditions of equation \eqref{eq:tomography} must fall into either subcase (1)
  or (2) of the lemma, and we are done.
\end{proof}

\LemLocalTomographyMultiple*
\begin{proof}
  The proof proceeds by induction on the size of \(R\).
  The case \(|R| = 0\) is trivial, and the case \(|R|=1\) is lemma
  \ref{lem:local_tomography}.
  Assume the statement is true for some non-empty \(R\), if \(R = V\) we
  are done since the induction cannot continue.
  If this is not the case, pick \(u \in V \setminus R\).
  If
  \begin{equation}
    \begin{aligned}
      (\bra{m:\vb{M}(u)} \otimes \bra{\vec{m}:\vb{M}}_R) \ket{\phi} \simeq
      (\bra{m:\vb{M}(u)} \otimes \bra{\vec{m}:\vb{M}}_R) \ket{\phi'} \\
      \qand \norm{\sqrt{d^{|R|}} (\bra{m:\vb{M}(u)}_u
        \otimes \bra{\vec{m}:\vb{M}}_R) \ket{\phi}} = \frac{1}{\sqrt{d}}
    \end{aligned}
  \end{equation}
  hold for all \(m \in \mathbb{F}\), then by
  lemma \ref{lem:local_tomography} we have one of the following cases:
  \begin{enumerate}
  \item \(\braket{\vec{m}:\vb{M}}{\phi} \simeq \braket{\vec{m}:\vb{M}}{\phi'}\) and
    \(\norm{\braket{\vec{m}:\vb{M}}{\phi}}  = \frac{1}{\sqrt{d^{|R|}}}\) for any
    \(\vec{m} \in \F^R\) so that by the induction hypothesis we are
    done.
  \item For each \(\vec{m} \in \F^R\), there are \(x,y \in \F\) and
    \(\ket{\psi_{\vec{m}:\vb{M}}} \in \mathcal{H}^{\otimes V \setminus \{u\}}\) such
      that \(\braket{\vec{m}:\vb{M}}{\phi} \simeq \ket{x : Q_u}_u \otimes
      \ket{\psi_{\vec{m}:\vb{M}}}\) and \(\braket{\vec{m}:\vb{M}}{\phi'} \simeq \ket{y :
        Q_u}_u \otimes \ket{\psi_{\vec{m}:\vb{M}}}\).
    \end{enumerate}

  In the latter case, make some arbitrary choice of measurements \(\vb{M} : R \to
  U(\mathcal{H})\), and expand \(\ket{\phi}\) in their common eigenbases:
  \begin{equation}
    \ket{\phi} = \sum_{n \in \F} \sum_{\vec{a}
      \in \F^R} c(n,\vec{a}) \ket{n:Q_u}_u \otimes \ket{\vec{a}:\vb{M}}_R \otimes \ket{\phi(a)}.
  \end{equation}
  Then in particular, we have that for any choice \(\vec{m} \in \F^R\),
  \begin{equation}
    \bra{\vec{m}:\vb{M}}\ket{\phi} = \sum_{n \in \F}
    c(n,\vec{m}) \ket{n:Q_u}_u \otimes \ket{\phi(n,\vec{m})}
    \simeq \ket{x:Q_u}_u \otimes \ket{\psi_{\vec{m}:\vb{M}}},
  \end{equation}
  which implies that \(c(n,\vec{m}) = 0\) whenever \(n \neq x\), and we have
  \(\ket{\phi} = \ket{x:Q_u}\otimes\ket{\psi_x}\), where
  \begin{equation}
    \ket{\psi_x} = \sum_{\vec{m} \in \F^R} c(x,\vec{m}) \ket{\vec{m}:\vb{M}} \otimes \ket{\phi(x,\vec{m})}. 
  \end{equation}
  Similarly \(\ip{\vec{m}:\vb{M}}{\phi'} \simeq \ket{y:Q_u} \otimes \ket{\psi'_y}\).
  It follows that for any \(\vec{m} \in \F^R\), we must have
  \(\ip{\vec{m}:\vb{M}}{\psi_x} \simeq \ket{\psi_{\vec{m}:\vb{M}}}\) and \(\ip{\vec{m}:\vb{M}}{\psi'_y} \simeq
  \ket{\psi_{\vec{m}:\vb{M}}}\), so that \(\ip{\vec{m}:\vb{M}}{\psi_x} \simeq
  \ip{\vec{m}:\vb{M}}{\psi'_y}\).
  Then, by the induction hypothesis, there is \(L \subseteq R\) and
  \(\vec{x},\vec{y} \in \F^{L \cup \{u\}}\) such that \(\ket{\phi} =
  \ket{\psi} \bigotimes_{v \in L \cup \{u\}} \ket{x_u:Q_u}\) and \(\ket{\phi'} =
  \ket{\psi} \bigotimes_{v \in L \cup \{u\}} \ket{y_u:Q_u}\), and we are done.
\end{proof}

\section{Proof of lemmas \ref{lem:final_layer}-\ref{lem:recursive_decomposition}}
\label{app:maximal_delay_lemmas}
\LemFinalLayer*
\begin{proof}
  Let \(A \coloneqq O \cup \{u \in V \mid (\forall v \in V): G_{uv} = 0\}\), and
  define a layer decomposition \(\Lambda'\) on \((G,I,O,\lambda)\) by
  \begin{equation}
    \Lambda'_k \coloneqq \Lambda_k \setminus A \qfor k > 0 \qand \Lambda'_0 = \Lambda_0 \cup A.
  \end{equation}
  Then it is clear that \(\Lambda'\) is more delayed than \(\Lambda\).
  Let \(C'\) be the matrix obtained by replacing, for every isolated vertex
  \(u \in V\), the \(u\)-th column of \(C\) by \(C_{uu}1_{\{u\}}\).

  We show that \((C',\Lambda')\) is an \Fflow for \((G,I,O,\lambda)\).
  \begin{enumerate}
  \item We haven't touched the diagonal elements of \(C\) and have only
    changed the columns corresponding to isolated vertices. Then
    \begin{equation}
      (GC')_{uu} =
      \begin{cases}
        \sum_{v} G_{uv} C_{vu} = 0 \qif u \qq{is isolated;} \\
        (GC)_{uu} \qq{otherwise}.
      \end{cases}
    \end{equation}
    and condition (i) of the definition is still verified.
  \item Since \(C'_{uv} = C_{uv}\) if \(u \in I\) or \(v \in O\), we have
    condition (ii) of the definition.
  \item For every \(m>n \in \N^*\),
    \(C[\Lambda'_m,\Lambda'_n] = (GC)[\Lambda'_m,\Lambda'_n] = 0\), since they
    are submatrices of \(C[\Lambda_m,\Lambda_n] = (GC)[\Lambda_m,\Lambda_n] =
    0\), and \(C[\Lambda'_m,\Lambda'_m], (GC)[\Lambda'_m,\Lambda'_m]\) are
    diagonal for the same reason. Also,
    \begin{equation}
    (GC')[\Lambda'_m,\Lambda'_0]_{uv} =
    \begin{cases}
      0 \qif v \in \Lambda_0 \qq{since otherwise \((C,\Lambda)\) is not an \Fflow;} \\
      \sum_{k \in V} G_{uk}C'_{kv} = G_{uu} C_{uu} = 0 \qif v \text{ is isolated}; \\
      \sum_{k \in V} G_{uk}C'_{kv} = \sum_{k \in V} G_{uk}C_{kv} = 0 \qif v \in O. 
    \end{cases}
    \end{equation}
    Finally, it is clear that \(C'[\Lambda'_0,\Lambda'_0]\) is diagonal if
    \(C[\Lambda_0,\Lambda_0]\) was, since we have only added zero for outputs or
    ``diagonal'' columns for isolated vertices.
    Therefore we have condition (iii).
  \end{enumerate}
  As a result, \((C',\Lambda')\) is an \Fflow for \((G,I,O,\lambda)\) that is
  more delayed than \((C,\Lambda)\).
  This implies that we must have \(A \subseteq \Phi_0\) if \(\Phi\) is maximally
  delayed.

  Now assume there is some \(v \in \Lambda_0 \setminus O\).
  We know that \(C[\Lambda_0,\Lambda_0]\) is diagonal and that
  \(GC[\Lambda_n,\Lambda_0] = \sum_{n \leqslant
    \abs{\Lambda}} G[\Lambda_n,\Lambda_k]C[\Lambda_k,\Lambda_0] =
  G[\Lambda_n,\Lambda_0]C[\Lambda_0,\Lambda_0]\).
  If \(C_{uu} \neq 0\), then for \(GC[\Lambda_0,\Lambda_0]\) to be diagonal and
  \(GC[\Lambda_n,\Lambda_0] = 0\), we must have either \(C_{uv} = 0\) or for all
  \(u \in V\), \(G_{uv} = 0\) since then \((GC)_{uv} = G_{uv}C_{vv}\) must be
  \(0\) if \(u \neq v\).
  In the latter case, \(v\) is isolated in the graph \(G\).

  In the former case, \(G_{uu}C_{uu} = 0\), and we have
  \((C_{uu},(GC)_{uu}) = (0,0)\).
  But since \(u\) is not an output, we must have \((C_{uu},(GC)_{uu}) =
  \lambda(u)\), so that \((C,\Lambda)\) is not a \Fflow for
  \((G,I,O,\lambda)\).
  As a result, there can be no such \(u\) if \((C,\Lambda)\) is a valid \Fflow.
  We conclude that \(\Lambda_0 = O \cup \{u \in V \mid (\forall v \in V): G_{uv}
  = 0\}\).
\end{proof}

\LemPenultimateLayer*
\begin{proof}
  Let \((D,\Phi)\) be a maximally delayed \Fflow for \((G,I,O,\lambda)\) and
  define \(c^u\) as the \(u\)-th column of \(D\). The only elements below the
  diagonal in column \(v \in \Phi_1\) of \(D\) correspond to \(\Phi_1\) or \(\Phi_0\).
  Since \(D[\Phi_1,\Phi_1]\) and \((GD)[\Phi_1,\Phi_1]\) are diagonal, and
  \(\Phi_0 = O\) by lemma \ref{lem:final_layer}, for any \(v \notin O \cup
  \{u\}\) we must have \(D_{vu} = c^u_v = 0\) and \((GD)_{vu} = (Gc^u)_v = 0\).
  The condition \(\lambda(u) = (c_u,(Gc)_u)\) itself corresponds to part (iii)
  of the definition of \Fflow. As a result, every maximally delayed \Fflow of
  \((G,I,O,\lambda)\) must verify equation \eqref{eq:penultimate_layer}, and
  there can be no layer decomposition \(\Phi\) where \(\Phi_1\) is not contained
  in \(\Lambda_1\).
  
  Now, assume \((G,I,O,\lambda)\) is an open graph with \Fflow, that
  \((D,\Phi)\) is a maximally delayed \Fflow and let \(u \in \Lambda_1 \setminus
  \Phi_1\). Let \(E\) be the matrix obtained by replacing the u-th column of \(D\)
  by \(c^u\) and permuting the \(u\)-th column to the start of \(\Phi_1\). Then
  \((E,\Psi)\) where
  \begin{equation}
    \Psi_k \coloneqq
    \begin{cases}
      \Lambda_1 \cup \{u\} \quad \text{if} \quad k = 1; \\
      \Lambda_k \setminus \{u\} \quad \text{otherwise};
    \end{cases}
  \end{equation}
  is a more delayed \Fflow than \((D,\Phi)\). As a result, there can be no
  such \(u\), so that if \((D,\Phi)\) is maximally delayed, \(\Phi_1 = \Lambda_1\).
\end{proof}

\LemRecursiveDecomposition*
\begin{proof}
  It is clear that \((D,\Phi)\) is a layer decomposition, since if it were not,
  this would imply that \((C,\Phi)\) is not either.
  
  There cannot be a more delayed \Fflow of \((G,I,O \cup \Lambda_1,\lambda)\) since
  that would immediately imply that there is a layer decomposition of
  \((G,I,O,\lambda)\) that is more delayed than \((C,\Lambda)\).
\end{proof}

\end{document}